\begin{document}

\newcommand{\redundancy}{{\mathcal{R}}}
\newcommand{\gen}{{\cal P}}
\newcommand{\model}{{\cal M}}
\newcommand{\expdif}[1]{D^{(#1)}}
\newcommand{\partf}{\textnormal{Z}}
\newcommand{\xspace}{\ensuremath{\mathcal X}}
\newcommand{\zspace}{\ensuremath{\mathcal Z}}
\newcommand{\Thetanat}{\ensuremath{\Theta_{\text{nat}}}}
\newcommand{\Thetamean}{\ensuremath{\Theta_{\text{mean}}}}

\newcommand{\ceil}[1]{{\left\lceil#1\right\rceil}}
\newcommand{\floor}[1]{{\left\lfloor#1\right\rfloor}}

\newcommand{\di}{{\delta_i}}
\newcommand{\moment}[1]{\textnormal{m}_P^{(#1)}}

\newcommand{\integers}{{\mathbb{Z}}}
\newcommand{\reals}{{\mathbb{R}}}
\newcommand{\fractions}{{\mathbb{Q}}}
\newcommand{\nats}{{\mathbb{N}}}

\newcommand{\var}{{\textnormal{var}}}
\newcommand{\mhat}{{\hat\mu}}
\newcommand{\mihat}{{{\hat\mu}_i}}
\newcommand{\mnhat}{{{\hat\mu}_n}}
\newcommand{\mbar}{{\bar\mu}}
\newcommand{\mibar}{{{\bar\mu}_i}}
\newcommand{\mnbar}{{{\bar\mu}_n}}
\newcommand{\mstar}{{\mu^*}}
\newcommand{\mdagger}{{\mu^\dagger}}

\newcommand{\Uhat}{\hat{U}}

\newcommand{\mmstar}{{M_\mstar}}
\newcommand{\mmhat}{{M_\mhat}}
\newcommand{\mmihat}{{M_\mihat}}
\newcommand{\mmnhat}{{M_\mnhat}}
\newcommand{\mmu}{{M_\mu}}

\newcommand{\mitilde}{{\tilde\mu}_i}
\newcommand{\mntilde}{{\tilde\mu}_n}

\renewcommand{\lg}{\ln}

\newcommand\PP{P_{\textsc{\small p}}} \newcommand\PG{P_{\textsc{\small g}}}

\newcommand{\commentout}[1]{}

\newtheorem{definition}{Definition}
\newtheorem{example}{Example}
\newtheorem{theorem}{Theorem}
\newtheorem{corollary}{Corollary}
\newtheorem{lemma}[theorem]{Lemma}
\newtheorem{proposition}[theorem]{Proposition}
\newtheorem{conjecture}[theorem]{Conjecture}
\newtheorem{condition}{Condition}

\title{Prequential Plug-In Codes that Achieve Optimal Redundancy Rates even
  if the Model is Wrong}

\author{
  \authorblockN{Peter Gr\"unwald \texttt{pdg@cwi.nl}\qquad\qquad Wojciech
  Kot{\l}owski \texttt{kotlowsk@cwi.nl}} \authorblockA{%
    National Research Institute for Mathematics and Computer Science (CWI)\\
    P.O. Box 94079, 1090 GB Amsterdam, The Netherlands}
} \maketitle

\begin{abstract}
  We analyse the prequential plug-in codes relative to one-parameter
  exponential families $\model$. We show that if data are sampled
  i.i.d. from some distribution outside $\model$, then the redundancy of
  any plug-in prequential code grows at rate larger than
  ${1\over2} \ln n$ in the worst case. This means that
  plug-in codes, such as the Rissanen-Dawid ML code, may behave inferior
  to other important universal codes such as the 2-part MDL, Shtarkov
  and Bayes codes, for which the redundancy is always ${1\over2} \ln
  n + O(1)$. However, we also show that a slight modification of the ML
  plug-in code, ``almost'' in the model, does achieve the optimal
  redundancy even if the the true distribution is outside ${\cal M}$. 
\end{abstract}

\section{Introduction}
We resolve two open problems from \cite{Grunwald07} concerning
universal codes of the predictive plug-in type, also known as
``prequential'' codes. These codes were introduced independently by
Rissanen \cite{Rissanen84} in the context of MDL learning and by Dawid
\cite{Dawid84}, who proposed them as probability forecasting
strategies rather than directly as codes.  Roughly, the plug-in codes
relative to parametric model ${\cal M} = \{ M_{\theta} \mid \theta \in
\Theta \}$ work by sequentially coding each outcome $x_i$ based on an
an estimator $\bar{\theta}_{i-1} = \bar{\theta}(x^{i-1})$ for all
previous outcomes $x^{i-1} = x_1, \ldots, x_{i-1}$, leading to
codelength (log loss) $- \lg M_{\bar{\theta}_{i-1}}(x_i)$, where $M_{\theta}$
denotes the probability density or mass function indexed by $\theta$. If we take
$\bar{\theta}_i = \hat{\theta}_i$ equal to the ML (maximum likelihood)
estimator, we call the resulting code the ``ML plug-in code''.

There are many papers about the redundancy and/or expected regret for
the ML plug-in codes, for a large variety of models including
multivariate exponential families, ARMA processes, regression models
and so on. Examples are \cite{Rissanen1986c,Gerencser1987, LiY00}. In all these papers the ML plug-in
code is shown to achieve an asymptotic expected regret or redundancy
of ${k\over2}\lg n + O(1)$, where $k$ is the number of parameters of the
model and $n$ is the sample size. This matches the behaviour of the
Shtarkov, Bayesian and two-part universal codes
and is optimal in several ways, see \cite{BarronRY98}; since the ML
plug-in codes are often easier to calculate than any of these other
three codes, this appears to be a strong argument for using them in
practical data compression and MDL-style model selection. Yet, 
more recently \cite{DeRooijG05a,GrunwaldD05,DeRooijG06}, it was shown
that, at least for single-parameter exponential family models, when the data are generated i.i.d. $\sim P$, the
redundancy in fact grows as ${1\over2}\lg n\cdot{\var_P X\over\var_M
  X}$, where  $M$ is the distribution in $\model$ that is closest to
$P$ in Kullback-Leibler divergence, i.e. it minimizes $D(P\|M)$; a
related result for linear regression is in \cite{Wei1990}. In contrast
to the other cited works, \cite{DeRooijG05a,GrunwaldD05,DeRooijG06,Wei1990} do
not assume that $P \in \model$: the model may be {\em misspecified}.
Yet {\em if\/} $P\in\model$, then we have $M=P$ so that the redundancy
grows like it does in the other universal models. But when $M \neq P$,
the Shtarkov, Bayes and universal codes typically still achieve
asymptotic expected regret ${ 1 \over 2 } \lg n$, whereas the plug-in
codes behave differently. \cite{DeRooijG05a,DeRooijG06} show that this
leads to substantially inferior performance of the plug-in codes in
practical MDL model selection.
\subsection{The Two Open Problems/Conjectures}
In general, the estimator for ${\cal M}$ based on $x^{i-1}$ need not be an
element of the parametric model ${\cal M}$; for example, we may think
of the Bayesian predictive distribution as an estimator relative to
${\cal M}$, even though it is ``out-model'': rather than a single element of ${\cal
  M}$, it is a mixture of
distributions in ${\cal M}$, each weighted by their posterior density
(see Section~\ref{sec:squashed} for an example). We may thus re-interpret Bayesian universal codes as
prequential codes based on ``out-model'' estimators. From now on, we
reserve the term ``prequential plug-{\em in\/} code'', abbreviated to
just ``plug-in code'', for codes based on ``{\em in}-model''
estimators, i.e. estimators required to lie within ${\cal
  M}$. When we call a code just ``prequential'', it may be
sequentially constructed from either in-model or out-model
estimators. 
\cite{GrunwaldD05} established a nonstandard redundancy, different
from $(k/2) \lg n$, only for ML and closely related plug-in
codes. \cite[Open Problem Nr. 2]{Grunwald07} conjectured that a
similar result should hold for {\em all\/} plug-in codes, even if they
are based on in-model estimators very different from the ML estimator:
the conjecture was that {\em no\/}  plug-in code can achieve
guaranteed redundancy of $(k/2) \lg n$ if data are i.i.d.  $\sim P$
and $P \neq M$. Our first main result, Theorem~\ref{thm:no_plugin}
below, shows that, essentially, this conjecture is true for general
one-parameter exponential families $(k=1)$.  Specifically, the
redundancy can become much larger than $(1/2) \lg n$ if $P \not \in
{\cal M}$.

The second related conjecture \cite[Open Problem Nr. 3]{Grunwald07}
concerned the fact that for the normal location family with constant
variance $\sigma^2$, the Bayesian predictive distribution based on
data $x^{i-1}$ and a normal prior looks ``almost'' like an in-model
estimator for $x^{i-1}$, and hence the resulting code looks ``almost''
like a plug-in code: the Bayes predictive distribution is equal to the
normal distribution for $X_i$ with mean equal to the ML estimator
$\hat{\mu}(x^{i-1})$ but with a variance of order $\sigma^2 + O(1/n)$,
i.e. slightly larger than the variance $\sigma^2$ of
$P_{\hat{\mu}(x^{i-1})}$ (see Section~\ref{sec:squashed} for details).
Since the Bayesian predictive distribution does achieve the redundancy
$(1/2) \lg n$ even if $P \not \in {\cal M}$, this means that if ${\cal
  M}$ is the normal location family, then there does exist an
``almost'' in-model estimator (i.e. a slight modification of the ML
estimator) that does achieve $(1/2) \lg n$ even if $P \not \in {\cal
  M}$. Although this example does not extend straightforwardly to
other exponential families, \cite{Grunwald07} conjectured that there
should nevertheless be some general definition for ``almost'' in-model
estimators that achieve $(k/2) \lg n$ redundancy even if $P \not \in
{\cal M}$.  Here we show that this conjecture is true, at least if
$k=1$: we propose the {\em slightly squashed\/} ML estimator, a modification of
the ML estimator that puts it slightly outside model ${\cal M}$, and
in Theorem~\ref{thm:robustML} we show that this estimator achieves
$(1/2) \lg n$ redundancy even if $P \not \in {\cal M}$. This result is
important in practice since, in contrast to the
Bayesian predictive distribution, the slightly squashed ML estimator is in
general just as easy to compute as the ML estimator itself.

\section{Notation and Definitions}
Throughout this text we use nats rather than bits as units of
information.  
A sequence of outcomes $z_1,\ldots,
z_n$ is abbreviated to $z^n$. We write $E_{P}$ as a shorthand for $E_{Z \sim P}$,
the expectation of $Z$ under distribution $P$. When we consider a sequence of $n$
outcomes independently distributed $\sim P$, we use $E_{P}$ even as a shorthand
for the expectation of $(Z_1, \ldots, Z_n)$ under the $n$-fold product
distribution of $P$.  Finally, $P(Z)$ denotes the probability mass function of
$P$ in case $Z$ is discrete-valued, and it denotes the density of $P$, in case
$Z$ takes its value in a continuum. When we write `density function of $Z$',
then, if $Z$ is discrete-valued, this should be read as `probability mass
function of $Z$'. Note however that in our second main result, Theorem~\ref{thm:robustML} we do not
assume that the data-generating distribution $P$ admits a density.

Let $\zspace$ be a set of outcomes, taking values either in a finite or countable
set, or in a subset of $k$-dimensional Euclidean space for some $k \geq 1$. Let
$X: \zspace \rightarrow \reals$ be a random variable on $\zspace$, and let
$\xspace = \{ x \in \reals \; : \; \exists z \in \zspace: X(z) = x\}$ be the
range of $X$. Exponential family models are families of distributions on
$\zspace$ defined relative to a random variable $X$ (called `sufficient
statistic') as defined above, and a function $h: \zspace \rightarrow [0,\infty)$.
Let $\partf(\eta) :=\int_{z\in\zspace}e^{-\eta X(z)}h(z)dz$ (the integral to be
replaced by a sum for countable $\zspace$), and  $\Thetanat
:=\{\eta\in\reals:\partf(\eta)<\infty\}$.

\begin{definition}[Exponential family]\label{def:expfam}
  The \emph{single parameter exponential family} {\rm \cite{BarndorffNielsen78}}
  with {\em sufficient statistic\/} $X$ and {\em carrier\/} $h$ is the
  family of distributions with densities $M_\eta(z) :={1\over
    \partf(\eta)}e^{-\eta X(z)}h(z)$, where $\eta\in \Thetanat$.
  $\Thetanat$ is called the \emph{natural parameter space}. The
  family is called \emph{regular} if $\Thetanat$ is an open interval
  of $\reals$.
\end{definition}

In the remainder of this text we only consider single parameter, regular
exponential families, but this qualification will
henceforth be omitted. Examples include the Poisson, geometric and multinomial
families, and the model of all Gaussian  distributions with a fixed variance or 
mean.

The statistic $X(z)$ is sufficient for $\eta$ \cite{BarndorffNielsen78}. This suggests
reparameterizing the distribution by the expected value of $X$, which is called
the \emph{mean value parameterization}. The function $\mu(\eta)=E_{M_\eta}[X]$
maps parameters in the natural parameterization to the mean value
parameterization. It is a diffeomorphism (it is one-to-one, onto, infinitely
often differentiable and has an infinitely often differentiable inverse)
\cite{BarndorffNielsen78}. Therefore the mean value parameter space $\Thetamean$ is also an
open interval of $\reals$. We write $\model=\{\mmu\mid\mu\in\Thetamean\}$ where
$\mmu$ is the distribution with mean value parameter $\mu$. 
\commentout{We note that for some
models (such as Bernoulli and Poisson), the parameter space is usually given in
terms of the a non-open set of mean-values (e.g., $[0,1]$ in the Bernoulli case).
To make the model a regular exponential family, we have to restrict the set of
parameters to its own interior. Henceforth, whenever we refer to a standard
statistical model such as Bernoulli or Poisson, we assume that the parameter set
has been restricted in this sense.
}

We are now ready to define the plug-in universal model. This is a
distribution on infinite sequences $z_1, z_2, \ldots \in
\zspace^{\infty}$, recursively defined in terms of the distributions
of $Z_{n+1}$ conditioned on $Z^n = z^n$, for all $n = 1,2 , \ldots$.
In the definition, we use the notation $x_i := X(z_i)$. Note that we
use the term ``model'' both for a single distribution (``plug-in
universal model'', a common phrase in information theory) and for a
family of distributions (``statistical model'', a common phrase in
statistics).

\begin{definition}[Plug-in universal model]
\label{def:preq}
Let $\model=\{\mmu\mid\mu\in \Thetamean \}$ be an exponential family with mean
value parameter domain $\Thetamean$.  Given $\model$, constant $\mbar_0 \in
\Thetamean$ and a sequence of functions $\mbar(z^1),\mbar(z^2),\ldots$, such that
$\mbar(z^n) =: \mnbar  \in \Thetamean$, we define the
\emph{plug-in universal model} (or \emph{plug-in model} for short) $U$ by setting, for all $n$, all $z^{n+1}
\in \zspace^{n+1}$: $$U(z_{n+1} \mid z^n) = M_{\mnbar}(z_{n+1}),$$ where
$U(z_{n+1} \mid z^n)$ is the density/mass function of $z_{n+1}$ conditional on
$Z^n = z^n$. \end{definition}\smallskip

We usually refer to  plug-in universal model in terms of the 
codelength function of the corresponding plug-in universal code: 
\begin{equation}
\label{eq:codelength}
L_U(z^n) = \sum_{i=0}^{n-1}  L_U(z_{i+1} \mid z_i) =
\sum_{i=0}^{n-1} - \ln M_{\mibar}(z_{i+1}).
\end{equation}

The most important plug-in model is the ML (\emph{maximum
likelihood}) plug-in model, defined as follows:

\begin{definition}[ML plug-in model]
\label{def:preqb} Given $\model$ and
constants $x_0\in\Thetamean$ and $n_0>0$, we define the
\emph{ML plug-in model} $\Uhat$ by setting, for all $n$, all $z^{n+1}
\in \zspace^{n+1}$:
$$\Uhat(z_{n+1} \mid z^n) = M_{\hat{\mu}(z^n)}(z_{n+1}),$$
where
\begin{equation}
\label{eq:ML_estimator}
\hat{\mu}(z^n) = \mnhat :=\frac{x_0 \cdot n_0+\sum_{i=1}^n
x_i}{n+n_0}. 
\end{equation}
\end{definition}\smallskip

To understand this definition, note that for exponential families, for any
sequence of data, the ordinary maximum likelihood parameter is given by the
average $n^{-1} \sum x_i$ of the observed values of $X$ \cite{BarndorffNielsen78}.  Here we
define our plug-in model in terms of a slightly modified maximum likelihood
estimator that introduces a `fake initial outcome' $x_0$ with multiplicity $n_0$
in order to avoid infinite code lengths for the first few outcomes (a well-known
problem sometimes called the ``inherent singularity'' of predictive coding
\cite{BarronRY98,Grunwald07}) and to ensure that the plug-in ML code
of the first outcome is well-defined. In practice we can take $n_0 = 1$ but our
result holds for any $n_0 > 0$.

\begin{definition}[Relative redundancy]
 \label{def:red}
Following \cite{TakeuchiB98b,DeRooijG05a}, we define {\em relative redundancy\/} with respect to $P$ of a code $U$ that is
universal on a model $\model$, as:
\begin{equation}\label{eq:red}
\redundancy_{U}(n) := E_P[L_U(Z^n)] - \inf_{\mu \in \Thetamean}E_P [-
  \ln \mmu(Z^n)],
\end{equation}
where $L_U$ is the length function of $U$.
\end{definition}

We use the term \emph{relative redundancy} rather than just \emph{redundancy} to
emphasize that it measures redundancy relative to the element of the model that
minimizes the codelength rather than to $P$, which is not necessarily an element
of the model. From now on, we only consider $P$ under which the data
are i.i.d.  Under this condition, let $\mmstar$ be the element of
$\model$ that minimizes KL divergence to $P$: $$\mstar := \arg \min_{{\mu} \in
\Thetamean} D(P \| M_{\mu}) = \arg \min_{{\mu} \in \Thetamean} E_{P} [ - \ln
\mmu(Z)],$$
where the equality follows from the definition of KL divergence. If
$\mmstar$ exists, it is unique, and if $E_P[X] \in \Thetamean$, then
$\mu^* = E_P[X]$ \cite[Ch. 17]{Grunwald07}, and the relative redundancy satisfies
\begin{equation}
\label{eq:redb}
\redundancy_{U}(n) = E_{P}[L_U(Z^n)] - 
E_{P}[- \ln \mmstar(Z^n)]. 
\end{equation}
\section{First Result: Redundancy of Plug-In Codes}\label{sec:mainresult}
The three major types of universal codes, Bayes, NML and 2-part, achieve relative
redundancies that are (in an appropriate sense) close to optimal. Specifically,
under the conditions on $\model$ described above, and if data are
i.i.d. $\sim P$, then, under some mild conditions
on $P$, these universal codes satisfy:
\begin{equation}
\label{eq:bic}
\redundancy_U(n) = \frac{1}{2} \ln n + O(1),
\end{equation}
(where the $O(1)$ may depend on $\mu$ and the universal code
used), whenever $P \in \model$ or $P \not \in \model$.
(\ref{eq:bic}) is the famous `$k$ over $2$ log $n$ formula' ($k=1$ in our case),
refinements of which lie at the basis of practical
approximations to MDL learning \cite{Grunwald07}. 

While it is known that for $P \in \model$, the fourth major type of universal
code, the ML plug-in code, satisfies (\ref{eq:bic}) as well, it was
shown by 
\cite{DeRooijG05a,GrunwaldD05} that when $P$ is not in the model, the ML
plug-in code
may behave suboptimally. Specifically, its relative redundancy satisfies:
\begin{equation}
\label{eq:redundancyML}
\redundancy_{\Uhat}(n) = \frac{1}{2} \frac{\var_P X}{\var_{M_{\mu^*}} X} \ln n +
O(1),
\end{equation}
and can be significantly larger than (\ref{eq:bic}), when the variance of $P$ is large.

In this paper, we show that not only the ML plug-in code, but
\emph{every} plug-in code may behave suboptimally, when $P \notin
\model$. In other words, modifying the ML estimator $\mnhat$ or
introducing any other sequence of estimators $\mnbar$, and
constructing the plug-in code based on that sequence will not help to
satisfy (\ref{eq:bic}). Thus the optimal redundancy can only be
achieved by codes outside $\model$, unless $\model$ is the
Bernoulli family (since we assume the data are i.i.d., in the
Bernoulli case we must have that $P \in {\cal M}$; but the Bernoulli
case is the {\em only\/} case in which we must have $P \in {\cal M}$).

Our main result, Theorem \ref{thm:no_plugin}, concerns the case in
which $P$ is itself a member of some exponential family $\gen$, but
$\gen$ is in general different than $\model$. Then, the suboptimal
behavior of plug-in codes follows immediately as  Corollary
\ref{coll:no_plugin}, stated further below.

\begin{theorem}
\label{thm:no_plugin}
Let $\model = \{\mmu\mid\mu\in\Thetamean\}$ and $\gen = \{P_\mu \mid \mu \in \Thetamean\}$ be single parameter exponential families with the same sufficient
statistic $X$ and mean-value parameter space $\Thetamean$. Let $U$ denote
any plug-in model with respect to $\model$ based on the sequence
of estimators $\mbar_0,\mbar_1,\mbar_2,\ldots$.
Then, for Lebesgue almost all
$\mstar \in \Thetamean$ (i.e. all apart from a Lebesgue measure zero set), for $X, X_1, X_2, \ldots$
 i.i.d. $\sim
P_\mstar \in \gen$:
$$
\liminf_{n \to \infty} \frac{\mathcal{R}_U(n)}{\frac{1}{2} \ln n} \geq \frac{\var_{P_\mstar} X}{\var_{M_\mstar} X}.
$$
\end{theorem}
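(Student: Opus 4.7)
The plan is to reduce $\redundancy_U(n)$ to a cumulative squared-error of the in-model estimator sequence $\bar\mu_i$, and then to lower-bound that cumulative error by a Bayesian Cram\'er--Rao argument. For the first step, observe that $\bar\mu_i$ is a function of $Z^i$ only, so $Z_{i+1}$ is $P_{\mstar}$-independent of $\bar\mu_i$ with $E_{P_{\mstar}}[X(Z_{i+1})] = \mstar = E_{\mmstar}[X]$. Using the exponential-family form $-\ln \mmu(z) = \eta(\mu) X(z) + \ln\partf(\eta(\mu)) - \ln h(z)$, only the $-\ln h$ piece has an expectation that differs between $P_{\mstar}$ and $\mmstar$, and it does not involve $\mu$. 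Consequently
\begin{equation*}
E_{P_{\mstar}}\bigl[-\ln M_{\bar\mu_i}(Z_{i+1}) + \ln \mmstar(Z_{i+1}) \bigm| Z^i\bigr] = D(\mmstar\,\|\,M_{\bar\mu_i}),
\end{equation*}
so after telescoping and using (\ref{eq:redb}), $\redundancy_U(n) = \sum_{i=0}^{n-1} E_{P_{\mstar}}[D(\mmstar\,\|\,M_{\bar\mu_i})]$.

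Next, I would Taylor-expand the within-model KL divergence at $\mstar$. Because $\model$ is a regular one-parameter exponential family in mean parameterisation, $D(\mmstar\,\|\,M_\mu)$ is smooth and strictly convex in $\mu$, with minimum $0$ and second derivative $1/\var_{\mmstar}X$ at $\mstar$. Hence for every $\varepsilon > 0$ there is $r > 0$ such that $D(\mmstar\,\|\,M_\mu) \ge (1-\varepsilon)(\mu - \mstar)^2/(2\,\var_{\mmstar}X)$ whenever $|\mu - \mstar| \le r$, and strict convexity supplies a positive lower bound on the complement. Splitting each expectation over $\{|\bar\mu_i - \mstar| \le r\}$ and its complement, and noting that the far-region contribution is $o(\ln n)$ whenever $\redundancy_U(n) = O(\ln n)$ (otherwise the conclusion is already true), gives
\begin{equation*}
\redundancy_U(n) \ge \frac{1 - o(1)}{2\,\var_{\mmstar}X} \sum_{i=0}^{n-1} E_{P_{\mstar}}[(\bar\mu_i - \mstar)^2].
\end{equation*}

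It then remains to prove that, for Lebesgue-almost every $\mstar \in \Thetamean$,
\begin{equation*}
\liminf_{n \to \infty} \frac{1}{\ln n} \sum_{i=0}^{n-1} E_{P_{\mstar}}[(\bar\mu_i - \mstar)^2] \ge \var_{P_{\mstar}} X.
\end{equation*}
For this I would apply the van Trees (Bayesian Cram\'er--Rao) inequality to each $\bar\mu_i$: for any smooth, compactly supported prior $\pi$ on $\Thetamean$,
\begin{equation*}
\int E_{P_\mu}[(\bar\mu_i - \mu)^2]\,\pi(\mu)\,d\mu \ge \frac{1}{i \int I_P(\mu)\,\pi(\mu)\,d\mu + J(\pi)},
\end{equation*}
where $I_P(\mu) = 1/\var_{P_\mu} X$ is the Fisher information of $\gen$ and $J(\pi) = \int (\pi'(\mu))^2/\pi(\mu)\,d\mu$. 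Summing in $i$ and taking $\pi = \pi_{h_n}$ a smooth bump around a generic $\mu_0 \in \Thetamean$ of width $h_n \to 0$ slowly enough that $J(\pi_{h_n}) = o(\ln n)$ produces the averaged estimate $\int \sum_i E_{P_\mu}[(\bar\mu_i - \mu)^2]\,\pi_{h_n}(\mu)\,d\mu \ge (1 - o(1))\,\var_{P_{\mu_0}} X \cdot \ln n$. A Lebesgue-differentiation argument then upgrades this to the required pointwise-a.e.\ statement; combining it with the two displays above finishes the proof. The main obstacle is this final upgrade, since the function of $\mu$ being averaged depends jointly on $n$ and on the (possibly adversarial) estimator sequence, so uniformity of the Lebesgue differentiation must be established by balancing $h_n$ against $J(\pi_{h_n})$ while also controlling the tail probabilities $P_{\mstar}(|\bar\mu_i - \mstar| > r)$ needed in the quadratic reduction.
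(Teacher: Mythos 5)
Your reduction of $\redundancy_U(n)$ to $\sum_i E_{P_\mstar}[D(\mmstar\|M_{\bar\mu_i})]$ and the subsequent quadratic lower bound track the paper's Lemmas~\ref{lem:redundancy} and~\ref{lemm:variance_ratio}, but your key lower-bound step is genuinely different. The paper does not use van Trees: it builds a probabilistic source $Q$ with $Q(z_{n+1}\mid z^n)=P_{\bar\mu_n}$, so that $D(P_\mstar^{(n)}\|Q^{(n)})=\sum_i E_{P_\mstar}[D(P_\mstar\|P_{\bar\mu_i})]\approx \frac{1}{2}\sum_i E[(\bar\mu_i-\mstar)^2]/\var_{P_\mstar}X$, and then invokes Rissanen's theorem, which already delivers the \emph{pointwise} Lebesgue-a.e.\ bound $\liminf_n D(P_\mstar^{(n)}\|Q^{(n)})/(\tfrac12\ln n)\ge 1$. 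That citation entirely absorbs the ``obstacle'' you identify at the end: upgrading an integrated (Bayesian) bound to an a.e.\ pointwise one against an adversarially chosen estimator sequence. Your van Trees route is essentially a re-derivation of the content of Rissanen's theorem, so it is harder, not easier, than the paper's argument, though it is more self-contained if carried through. What it buys is independence from Rissanen's result; what it costs is precisely the differentiation/covering argument you flag, which is the nontrivial part.

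There is also a concrete gap in your quadratic reduction. You split $E[D(\mmstar\|M_{\bar\mu_i})]$ over $\{|\bar\mu_i-\mstar|\le r\}$ and its complement and claim the far-region contribution to $\sum_i E[(\bar\mu_i-\mstar)^2]$ is $o(\ln n)$ whenever $\redundancy_U(n)=O(\ln n)$. That does not follow: boundedness of $\redundancy_U(n)$ controls $\sum_i P(|\bar\mu_i-\mstar|>r)$, but since $\bar\mu_i$ is arbitrary and $\Thetamean$ may be unbounded, the squared error $(\bar\mu_i-\mstar)^2$ need not be bounded on the far region, so the tail of the squared-error sum is not controlled. The paper avoids this entirely by \emph{truncating the estimator}: it replaces $\bar\mu_i$ by its projection $\bar\mu_i'$ onto a fixed compact interval $\Theta_0\ni\mstar$, uses the monotonicity $D(\mmstar\|M_{\bar\mu_i})\ge D(\mmstar\|M_{\bar\mu_i'})$, Taylor-expands on $\Theta_0$ only, and then applies the pointwise lower bound (Rissanen, or in your framework van Trees) to the truncated estimator $\bar\mu_i'$, which is still a legitimate estimator of $\mu$ based on $Z^i$. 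You should adopt this truncation instead of the split-expectation step; the covering-by-rational-compacts argument at the end of the paper's proof then recovers the statement for a.e.\ $\mstar\in\Thetamean$.
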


\begin{proof}{\em (rough sketch; a detailed proof is in the Appendix)}
The proof is based on a theorem stated by Rissanen \cite{Rissanen86a} (see also
\cite{Grunwald07}, Theorem 14.2), a special case of which says the following.
Let $\Theta_0 \subset \Thetamean$ be a closed, non-degenerate interval, $\gen$ be defined as above, $P_{\mu}^{(n)}$ be a joint
distribution of $n$ outcomes generated i.i.d. from $P_\mu$, $Q$ be an arbitrary
probabilistic source, i.e. a distribution on infinite sequences $z_1, z_2, \ldots
\in \zspace^{\infty}$, and let $Q^{(n)}$ be its restriction to the first $n$
outcomes. Define:
$g_n(\mstar) = \frac{D(P^{(n)}_\mstar \|
Q^{(n)})}{\frac{1}{2} \ln n}.$
Then for Lebesgue almost all  $\mstar \in \Theta_0$, $\liminf_{n \to \infty}
g_n(\mstar) \geq 1$. 

We apply Rissanen's theorem by constructing a source $Q$, specifying the conditional probabilities
$Q(z_{n+1}|z^{n}) := P_\mnbar$, for every $n \geq 1$. We now have:
\begin{eqnarray}
\label{eq:KL_deriv}
D(P_\mstar^{(n)} \| Q^{(n)})\!\!\!&=&\!\!\! \sum_{i=0}^{n-1} E_{P_\mstar}
\left[ \ln P_{\mstar}(Z_{i+1}) - \ln Q(Z_{i+1}|Z^{i}) \right] \nonumber \\
\!\!\!&=&\!\!\! \sum_{i=1}^{n-1} E_{P_\mstar} \left[ D(P_{\mstar} \|
P_{\mibar}) \right].
\end{eqnarray}
To see how (\ref{eq:KL_deriv}) is related to our case, let us first rewrite the redundancy in a more convenient form: 
\begin{equation}
\label{eq:redundancy_convienient}
\redundancy_U(n)=\sum_{i=0}^{n-1}\,E_{P_\mstar}\left[D(\mmstar\parallel
M_\mibar)\right].
\end{equation}
The derivation of (\ref{eq:redundancy_convienient}) make use of a standard result in the theory of exponential families and can be found e.g. in \cite{Grunwald07}.

Comparing (\ref{eq:KL_deriv}) and (\ref{eq:redundancy_convienient}),
we see that although in both expressions, the expectation is taken
with respect to $P_\mstar$, (\ref{eq:KL_deriv}) is a statement about
KL divergence between the members of $\gen$, while
(\ref{eq:redundancy_convienient}) speaks about the members of
$\model$. The trick, which allows us to relate both expressions, is to
examine their second-order behavior. By expanding $D(P_{\mstar} \|
P_{\mibar})$ into a Taylor series around $\mstar$, we get: $$
D(P_{\mstar} \| P_{\mibar}) \simeq 0 + \expdif{1}(\mstar) (\mibar -
\mstar) + \frac{1}{2} \expdif{2}(\mstar) (\mibar - \mstar)^2,
$$
where we abbreviated $\expdif{k}(\mu) = \frac{d^k}{d\mu^k} D(P_\mstar\|P_\mu)$. The term $\expdif{1}(\mstar)$ is
zero, since $D(\mstar \| \mu)$ as a function of $\mu$ has its
minimum at $\mu=\mstar$ \cite{BarndorffNielsen78}. As is well-known \cite{BarndorffNielsen78}, for
exponential families the term $\expdif{2}(\mu)$ coincides precisely with the
Fisher information $I_{\gen}(\mu)$ evaluated at $\mu$. Another standard result
\cite{BarndorffNielsen78} for the mean-value parameterization says
that for all $\mu$, $I_{\gen}(\mu) = \frac{1}{\var_{P_\mu}
  X}$. Therefore, we get $D(P_{\mstar} \| P_{\mibar}) \simeq
\frac{1}{2} \frac{(\mibar - \mstar)^2}{\var_{P_\mstar} X}$, and
similarly, $D(M_{\mstar} \| M_{\mibar}) \simeq \frac{1}{2}
\frac{(\mibar - \mstar)^2}{\var_{M_\mstar} X}$, so that 
$
D(M_{\mstar} \| M_{\mibar}) \simeq D(P_{\mstar} \| P_{\mibar}) \frac{\var_{P_\mstar} X}{\var_{M_\mstar} X}
$, 
and using (\ref{eq:KL_deriv}) and (\ref{eq:redundancy_convienient}):
$$
\begin{array}{c}
\redundancy_U(n) \simeq D(P_\mstar^{(n)} \| Q^{(n)}) \frac{\var_{P_\mstar} X}{\var_{M_\mstar} X}.
\end{array}
$$
The last step of the proof is to use Rissanen's theorem and conclude
that
$\liminf_{n \to \infty} \frac{\redundancy_U(n)}{\frac{1}{2}  \ln n}$
is equal to
$$
\begin{array}{c}
\liminf_{n \to \infty} \frac{D(P^{(n)}_\mstar \|
Q^{(n)})}{\frac{1}{2} \ln n} \frac{\var_{P_\mstar} X}{\var_{M_\mstar}
X} 
\geq \frac{\var_{P_\mstar} X}{\var_{M_\mstar} X},
\end{array}
$$
for Lebesgue almost all  $\mstar \in \Theta_0$, and thus for Lebesgue almost all  $\mstar \in \Thetamean$.
\end{proof}
We now use Theorem~\ref{thm:no_plugin}
to show that the redundancy of plug-in codes is suboptimal for all
exponential families which satisfy the following very weak condition: 
\begin{condition}\label{cnd:dispersion_model}
Let $\model = \{\mmu\mid\mu\in\Thetamean\}$ be a single parameter exponential
family with sufficient statistic $X$ and mean-value parameter space $\Thetamean$.
We require that there exists another single-parameter exponential family $\gen = \{P_\mu \mid \mu \in \Thetamean\}$ with the same
mean-value parameter space as $\model$, but with strictly larger variance 
than $\model$ for every $\mu \in \Thetamean$.
\end{condition}

The Condition \ref{cnd:dispersion_model} is widely satisfied among known
exponential families. When $\xspace = [a,b]$, we define
$P_\mu$ to be a ``scaled'' Bernoulli model, by putting all probability mass on
$\{a,b\}$ in such a way that $E_{P_\mu} =\mu$. It is easy to show, that
such distribution has the highest variance among all distributions defined on $[a,b]$ with a given
mean value $\mu$; therefore $\var_{P_\mu}X > \var_{M_\mu}X$, unless $\model$ is
a ``scaled'' Bernoulli itself. When $\xspace = \reals$, $\gen$ can be chosen
 to be a normal family with fixed, sufficiently large variance
$\sigma^2$. For $X=[0,\infty)$, $\gen$ can be taken to be a gamma family with sufficiently
large scale parameter. When $\xspace = \{0,1,2,\ldots\}$, $\gen$ can
be taken to be 
negative binomial (with expected ``number of successes'' sufficiently small). 

Thus, we see that for all commonly
used exponential families, except for Bernoulli, Condition \ref{cnd:dispersion_model} holds. On the
other hand if $\model$ is Bernoulli, Corollary \ref{coll:no_plugin}
is no longer relevant anyway, since then $P$ must lie in ${\cal M}$.
\begin{corollary}
\label{coll:no_plugin}
Let $\model = \{\mmu\mid\mu\in\Thetamean\}$ a single parameter exponential
family with sufficient statistic $X$ and mean-value parameter space
$\Thetamean$, satisfying Condition \ref{cnd:dispersion_model}. Let $U$ denote
any plug-in model with respect to $\model$ based on any sequence
of estimators $\mbar_1,\mbar_2,\ldots$.
Then, there exists a family of distributions $\gen = \{P_\mu \mid \mu \in
\Thetamean\}$, such that for Lebesgue almost all
$\mstar \in \Thetamean$, for $X, X_1, X_2, \ldots$  i.i.d. $\sim
P_\mstar$:
$$
\liminf_{n \to \infty} \frac{\mathcal{R}_U(n)}{\ln n} 
\geq  \frac{1}{2} \frac{\var_{P_\mstar} X}{\var_{M_\mstar} X}
> \frac{1}{2},
$$
so that the set of $\mstar$ for which $U$ achieves the regret $\frac{1}{2} \ln n
+ O(1)$ is a set of Lebesgue measure zero.
\end{corollary}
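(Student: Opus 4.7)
The plan is to obtain the corollary essentially as a direct combination of Condition~\ref{cnd:dispersion_model} and Theorem~\ref{thm:no_plugin}, with the strict inequality in the Condition doing all the quantitative work.

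First I would invoke Condition~\ref{cnd:dispersion_model} to pick an exponential family $\gen = \{P_\mu \mid \mu \in \Thetamean\}$ with the same sufficient statistic $X$ and the same mean-value parameter space as $\model$, but satisfying $\var_{P_\mu} X > \var_{M_\mu} X$ strictly for every $\mu \in \Thetamean$. The paragraph preceding the corollary already exhibits such a $\gen$ for all standard cases (a ``scaled'' Bernoulli on $[a,b]$, a normal with sufficiently large variance on $\reals$, a gamma with sufficiently large scale on $[0,\infty)$, a negative binomial on $\nats$), so no additional construction is needed here.

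Next I would apply Theorem~\ref{thm:no_plugin} to this particular $\gen$ and the given plug-in model $U$. The theorem's conclusion is exactly
$$\liminf_{n \to \infty} \frac{\redundancy_U(n)}{\frac{1}{2} \ln n} \geq \frac{\var_{P_\mstar} X}{\var_{M_\mstar} X}$$
for Lebesgue almost every $\mstar \in \Thetamean$, when $X_1, X_2, \ldots \sim P_\mstar$ i.i.d. Rewriting with $\ln n$ in the denominator gives $\liminf_n \redundancy_U(n)/\ln n \geq \tfrac{1}{2} \var_{P_\mstar} X / \var_{M_\mstar} X$. By our choice of $\gen$ via Condition~\ref{cnd:dispersion_model}, the variance ratio is strictly greater than $1$ at every $\mstar$, so the right-hand side is strictly greater than $\tfrac{1}{2}$, establishing the displayed chain of inequalities in the corollary.

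Finally, the ``Lebesgue measure zero'' clause is just the contrapositive of what has just been shown: any $\mstar$ for which $\redundancy_U(n) = \tfrac{1}{2}\ln n + O(1)$ would force $\liminf_n \redundancy_U(n)/\ln n = \tfrac{1}{2}$, contradicting the strict lower bound $\tfrac{1}{2}\var_{P_\mstar}X/\var_{M_\mstar}X > \tfrac{1}{2}$; hence all such $\mstar$ must lie in the exceptional null set provided by Theorem~\ref{thm:no_plugin}. I do not anticipate any real obstacle here: all the analytic content (the Rissanen lower bound, the Taylor expansion of the KL divergence, the identification of the Fisher information with $1/\var_{P_\mu}X$) has already been absorbed in Theorem~\ref{thm:no_plugin}, and Condition~\ref{cnd:dispersion_model} is engineered precisely so that the variance ratio is bounded away from $1$. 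The one mild bookkeeping point to check is that the exceptional null set depends only on $\gen$ and $U$, not on the particular $\mstar$ at which the liminf is evaluated, which is immediate from the statement of Theorem~\ref{thm:no_plugin}.
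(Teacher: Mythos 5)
Your proposal is correct and matches the paper's argument exactly: the paper's proof reads simply ``Immediate from Theorem~\ref{thm:no_plugin},'' and you have spelled out the same one-step derivation, combining the family $\gen$ supplied by Condition~\ref{cnd:dispersion_model} with the liminf bound from Theorem~\ref{thm:no_plugin} and noting that the strict variance inequality pushes the bound strictly above $\tfrac{1}{2}$.
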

\begin{proof}
Immediate from Theorem \ref{thm:no_plugin}.
\end{proof}

\section{Second  Result: Optimality of  Squashed ML}
\label{sec:squashed}
We showed that every plug-in code, including the ML plug-in code,
behaves suboptimally for 1-parameter families $\model$ unless $\model$
is Bernoulli. This fact does not, however, exclude the possibility
that a small modification of the ML plug-in code, which puts the
predictions slightly outside $\model$, will lead to the optimal redundancy
(\ref{eq:bic}).  An argument supporting this claim comes from
considering the Bayesian predictive distribution when $\model$ is the normal family with
fixed variance $\sigma^2$. In this case, the Bayesian code based on prior
$\mathcal{N}(\mu_0,\tau^2_0)$ has a simple form \cite{Grunwald07}:
$$U_{\text{Bayes}}(z_{n+1} \mid z^n) = f_{\mu_n,\tau_n^2+\sigma^2}(z_{n+1}),$$
where $f_{\mu,\sigma^2}$ is the density of normal distribution
$\mathcal{N}(\mu,\sigma^2)$,
$$ 
\begin{array}{c}\mu_n = \frac{(\sum_{i=1}^n x_i) + \frac{\sigma^2}{\tau_0^2}
\mu_0}{n+\frac{\sigma^2}{\tau_0^2}}, 
\quad \text{and} \quad 
\tau_n^2 = \frac{\sigma^2}{n + \frac{\sigma^2}{\tau_0}}.\end{array} $$
Thus, the Bayesian predictive distribution is itself a Gaussian with
mean equal to the modified maximum likelihood estimator (with 
$n_0 = \sigma^2/\tau_0^2$), albeit
with a slightly larger variance $\sigma^2 + O(1/n)$. This shows that
for the normal family with fixed variance, there exists an ``almost'' in-model code, which 
satisfies (\ref{eq:bic}). This led \cite{Grunwald07} to conjecture
that something similar holds for general exponential families. Here we
show that this is indeed the case: we propose a simple modification of the ML plug-in universal
model, obtained by predicting $z_{n+1}$ using a slightly ``squashed'' version
$M'_{\hat{\mu}_n}$ of
the ML estimator $M_{\hat{\mu}_n}$, defined as:
$$
M'_{\mnhat}(z_{n+1})
: = M_{\mnhat}(z_{n+1}) \frac{1 +
\frac{1}{2n} I_\model(\mnhat)(x_{n+1} - \mnhat)^2}{1 + \frac{1}{2n}},
$$
where
$\mnhat$ is defined as in (\ref{eq:ML_estimator}) and $I_\model(\mu)$ is the
Fisher information for model $\model$.
Note that $M'_{\mnhat}(z_{n+1})(\cdot)$ represents a
valid probability density: it is non-negative due to
$I_\model(\mnhat) > 0$ (property of exponential families), and it is
properly normalized:
$$
\begin{array}{cc}
&\int_\xspace M'_{\mnhat}(z_{n+1})(z) dz =  
\left(1 + \frac{1}{2n} \right)^{-1} \bigg( \int_\xspace M_{\mnhat}(z) dz \\
&+ 
\frac{1}{2n}I_\model(\mnhat) \int_\xspace (X(z) - \mnhat)^2 M_{\mnhat}(z) dz
 \bigg) = 1,
\end{array}
$$
where the final equality follows because for exponential families,
$I_{\cal M}(\mu) = (\var_{M_{\mu}}X)^{-1}$. 
While $M' \not \in {\cal M}$, we have $D(M'_{\hat{\mu}_n} \|
M_{\hat{\mu}_n}) = O(1/n)$, i.e. $M'$ is ``almost'' in-model
estimator.
\begin{definition}[Squashed ML prequential model]
\label{def:preqc} Given $\model$,
constants $x_0\in\Thetamean$ and $n_0>0$, we define the
\emph{slightly squashed ML prequential model} $U$ by setting, for all $n$, all
$z^{n+1} \in \zspace^{n+1}$:
$$U(z_{n+1} \mid z^n) = M'_{\mnhat}(z_{n+1}),$$ 
\end{definition}\smallskip
where $M'$ is the slightly squashed ML estimator as above. 
The codelengths of the corresponding slightly squashed ML prequential code are
not harder to calculate than those of the ordinary
ML plug-in model and in some cases they are  easier to calculate than the
lengths of the Bayesian universal code. On the other hand, we show
below that the slightly squashed ML code always achieves the optimal
redundancy, satisfying (\ref{eq:bic}). 

\begin{theorem}
\label{thm:robustML}
  Let $X, X_1, X_2, \ldots$ be i.i.d.$\sim P$, with $E_P[X] = \mstar$. 
Let  $\model$ be a single parameter exponential family with sufficient
statistic $X$ and $\mstar$ an element of the mean value parameter space. Let
$U$ denote the slightly squashed ML model with respect to $\model$.  If
$\model$ and $P$ satisfy Condition \ref{condition} below,
then:
\begin{equation}
\label{eq:main}
\redundancy_U(n)= \frac{1}{2} \ln n + O(1).
\end{equation}
\end{theorem}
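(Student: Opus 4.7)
The plan is to reduce the proof to the known suboptimal redundancy (\ref{eq:redundancyML}) for the ordinary ML plug-in code by exhibiting the squashing factor as an additive correction to the per-outcome codelength. Taking logarithms in Definition~\ref{def:preqc} yields, for each $i \geq 1$,
\begin{equation*}
-\ln M'_{\mihat}(z_{i+1}) = -\ln M_{\mihat}(z_{i+1}) + \ln\!\bigl(1+\tfrac{1}{2i}\bigr) - \ln\!\bigl(1+\tfrac{1}{2i}I_\model(\mihat)(X_{i+1}-\mihat)^2\bigr),
\end{equation*}
so summing over $i$, taking $E_P$, and absorbing the first few indices into an $O(1)$ term gives $\redundancy_U(n) = \redundancy_{\Uhat}(n) + B_n - E_P[A_n]$, with $B_n := \sum_{i}\ln(1+\tfrac{1}{2i}) = \tfrac{1}{2}\ln n + O(1)$ and $A_n := \sum_{i}\ln(1+\tfrac{1}{2i}I_\model(\mihat)(X_{i+1}-\mihat)^2)$. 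By (\ref{eq:redundancyML}), $\redundancy_{\Uhat}(n) = \tfrac{1}{2}\tfrac{\var_P X}{\var_{M_\mstar}X}\ln n + O(1)$. The theorem therefore reduces to showing $E_P[A_n] = \tfrac{1}{2}\tfrac{\var_P X}{\var_{M_\mstar}X}\ln n + O(1)$, i.e.\ that the squashing correction exactly cancels the excess the ML code incurs over the target rate.

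For $E_P[A_n]$ I would Taylor-expand $\ln(1+y)=y+R(y)$ with $|R(y)|\le y^2$ for $y\ge 0$. Conditioning on $Z^i$ and using that $X_{i+1}$ is independent of $Z^i$ under i.i.d.\ $P$, one has $E_P[(X_{i+1}-\mihat)^2\mid Z^i] = \var_P X + (\mihat-\mstar)^2$. Since $\mihat$ is a regularised sample mean, $E_P[(\mihat-\mstar)^2]=O(1/i)$; combined with smoothness of $I_\model$ on $\Thetamean$ and the identity $I_\model(\mstar)=1/\var_{M_\mstar}X$, a first-order expansion $I_\model(\mihat)=I_\model(\mstar)+O(|\mihat-\mstar|)$ and Cauchy--Schwarz on the resulting cross-term give
\begin{equation*}
E_P\!\bigl[\tfrac{1}{2i}I_\model(\mihat)(X_{i+1}-\mihat)^2\bigr] = \tfrac{1}{2i}\tfrac{\var_P X}{\var_{M_\mstar}X} + O(i^{-3/2}),
\end{equation*}
whose telescoping sum is $\tfrac{1}{2}\tfrac{\var_P X}{\var_{M_\mstar}X}\ln n + O(1)$. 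The quadratic remainder is bounded by $\sum_i \tfrac{1}{4i^2}E_P[I_\model(\mihat)^2(X_{i+1}-\mihat)^4] = O(1)$, provided $P$ has finite fourth moment of $X$ and $I_\model(\mihat)$ is uniformly controlled. Adding the three pieces yields $\redundancy_U(n) = \tfrac{1}{2}\tfrac{\var_P X}{\var_{M_\mstar}X}\ln n + \tfrac{1}{2}\ln n - \tfrac{1}{2}\tfrac{\var_P X}{\var_{M_\mstar}X}\ln n + O(1)$, which is exactly $\tfrac{1}{2}\ln n + O(1)$.

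The main obstacle I expect is the uniform control of $I_\model(\mihat)$. Since $\Thetamean$ is open and $I_\model$ may blow up toward its boundary, pointwise smoothness is insufficient; one needs a concentration bound showing that $\mihat$ stays in a fixed compact subinterval of $\Thetamean$ with overwhelming probability, together with a crude deterministic bound (e.g.\ via the bounded range of $X$, or exponential moments under $P$) that absorbs the rare excursions. This is precisely what Condition~\ref{condition} is designed to supply, simultaneously underwriting the fourth-moment bound required for the quadratic Taylor remainder and the regularity needed for (\ref{eq:redundancyML}) to hold at precision $O(1)$ for the given $P$. Once such a truncation/concentration argument is in place, the remaining Taylor estimates and the harmonic-sum bookkeeping are routine, and the identity above delivers the desired $\tfrac{1}{2}\ln n + O(1)$.
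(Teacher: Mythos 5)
Your proposal follows the same route as the paper: decompose $\redundancy_U(n)=\redundancy_{\Uhat}(n)+E_P[L_U-L_{\Uhat}]$, split the codelength difference into $\sum_i\ln(1+\tfrac{1}{2i})$ (contributing $\tfrac12\ln n$) minus $\sum_i\ln(1+V_i)$ with $V_i=\tfrac1{2i}I_\model(\mihat)(X_{i+1}-\mihat)^2$, linearize the log, condition on $Z^i$, use $E_P[(X_{i+1}-\mihat)^2\mid Z^i]=\var_PX+(\mihat-\mstar)^2$ and $I_\model(\mstar)=1/\var_{M_\mstar}X$, and let the resulting $\tfrac12\tfrac{\var_PX}{\var_{M_\mstar}X}\ln n$ cancel the excess in $\redundancy_{\Uhat}(n)$ from (\ref{eq:redundancyML}). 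The one place you hand-wave---uniform control of $I_\model(\mihat)$ when $\mihat$ strays toward $\partial\Thetamean$---is exactly what the paper settles via a region-splitting lemma (their Lemma~\ref{lemm:bounding}, combining Theorems~\ref{thm:devavasym} and~\ref{thm:boundprob}); note that a bare fourth-moment assumption is generally insufficient, which is why Condition~\ref{condition} ties the required number of moments $k$ to the growth rate of $I_\model$ and of $\tfrac{d^2}{d\mu^2}I_\model$.
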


\begin{condition}\label{condition}
We require that the following holds both for $T:= X$ and $T:=- X$:
\begin{itemize}
\item If $T$ is unbounded from above then there is a
  $k\in\{4,6,\ldots\}$ such that the first $k$ moments of $T$ exist under $P$,
  that ${d^2\over d\mu^2}I_\model(\mu)=O\left(\mu^{k-4}\right)$, ${d^4 \over
  d\mu^4} D(M_\mstar \| \mmu) = O(\mu^{k-6})$ and that either $I_\model(\mu)$ is
  constant or $I_\model(\mu)=O\left(\mu^{k/2-3}\right)$.
\item If $T$ is bounded from above by a constant $g$ then ${d^2\over
d\mu^2}I_\model(\mu)$, ${d^4 \over
  d\mu^4} D(M_\mstar \| \mmu)$, and $I_\model(\mu)$ are polynomial in
  $1/(g-\mu)$.
\end{itemize}
\end{condition}
The usefulness of Theorem \ref{thm:robustML} depends on the validity of
Condition \ref{condition} among commonly used exponential families. As
can be seen from Figure
\ref{fig:condition}, for some standard exponential families, 
our condition applies whenever the fourth moment of $P$ exists.
\begin{proof}{\em (of Theorem~\ref{thm:robustML}; rough sketch --- a
    detailed proof is in the Appendix)}
We express the relative redundancy of the slightly squashed
ML plug-in code $U$ by the sum of the relative redundancy of the ordinary
ML plug-in code $\Uhat$ and the difference in expected codelengths between
$U$ and $\Uhat$:
$$
\begin{array}{l}
\redundancy_{U}(n) = E_{P}[L_U(Z^n)] - 
E_{P}[- \ln \mmstar(Z^n)] =\\
E_{P}[L_U(Z^n) - L_{\Uhat}(Z^n)] + \redundancy_{\Uhat}(n)
= \\  E_{P}[L_U(Z^n) - L_{\Uhat}(Z^n)] + \frac{1}{2}
\frac{\var_P X}{\var_{M_{\mu^*}} X} \ln n + O(1),
\end{array}
\vspace*{-3pt}
$$
where the last equality follows from (\ref{eq:redundancyML}).
We have:
$$
\begin{array}{l}
L_U(Z^n) - L_{\Uhat}(Z^n) = \\
\sum_{i=0}^{n-1} \left(- \ln U(Z_{i+1} \mid Z_i) +
\ln {\Uhat}(Z_{i+1} \mid Z_i) \right)  =  \\
\sum_{i=0}^{n-1} \left( \ln\left(1 + \frac{1}{2i}\right) -
\ln\left(1 + \frac{1}{2i} I_\model(\mihat)(X_{i+1} -
\mihat)^2\right)\right) .
\end{array}
$$
Since $\ln\left(1+\frac{1}{2i}\right) = \frac{1}{2i} + O(i^{-2})$, we get
$\sum_{i=0}^{n-1} \ln\left(1 + \frac{1}{2i}\right) = \frac{1}{2} \ln n + O(1)
$. Denoting $V_i =\frac{1}{2i}
I_\model(\mihat)(X_{i+1} - \mihat)^2$, we also get $\ln(1+V_i) = V_i +
O(i^{-2})$. Next, we consider $E_P[V_i]$:
$$
\begin{array}{l}
E_P[V_i] = \frac{1}{2i} E_P \left[ 
I_\model(\mihat)(X_{i+1} - \mstar + \mstar - \mihat)^2 \right] = \\
\frac{1}{2i} E_P \left[I_\model(\mihat)\left( \var_{P_\mstar}X + (\mstar - \mihat)^2
 \right) \right] 
= \\ \frac{1}{2i} \left(\var_{P_\mstar}X E_P \left[I_\model(\mihat)\right] + E_P
\left[I_\model(\mihat) (\mstar - \mihat)^2\right] \right).
\end{array}
$$
The second term $E_P\left[I_\model(\mihat) (\mstar - \mihat)^2\right]$ is
$O(i^{-1})$ as $E_P[(\mstar - \mihat)^2] = O(i^{-1})$ and $E[I_\model(\mihat)]
= I_\model(\mstar) + O(i^{-1})$ (follows from expanding $I_\model(\mihat)$ up to
the first order around $\mstar$). Similarly, the first term is
$(\var_{P_\mstar}X) I_\model(\mstar) + O(i^{-1})$. Thus, using
$I_\model(\mstar) = \frac{1}{\var_{M_\mstar}X}$, we finally get: $$ E_P[-\ln(1+V_i)] = -E_P[V_i] + O(i^{-2}) = \frac{1}{2i} \frac{\var_{P_\mstar}X}{\var_{M_\mstar}X} + O(i^{-2}). $$ Taking all together, we see that the terms $\frac{\var_{P_\mstar}X}{\var_{M_\mstar}X}$ cancel and we finally get
$
R_U(n) = \frac{1}{2} \ln n + O(1).
$
Condition \ref{condition} is necessary to ensure that all Taylor
expansions above hold.
\end{proof}

\begin{figure}
\caption{Fisher information, its second derivative and a fourth derivative of
the divergence for a number of exponential families. 
For the normal
distribution with fixed mean
we use mean 0 and the density of the squared outcomes is given as a function of the variance. 
}
\begin{center}
\label{fig:condition}
\begin{footnotesize}
\begin{tabular}{l@{}c@{}c@{}c}
\hline
Distribution & $I(\mu)$ & $\frac{d^2}{d\mu^2}I(\mu)$ & ${d^4 \over
  d\mu^4} D(M_\mstar \| \mmu)$ \\
\hline \\[-3mm] 
Bernoulli & $\frac{1}{\mu(1-\mu)}$ & $\frac{2}{\mu^3} + \frac{2}{(1-\mu)^3}$ &
$\frac{6\mstar}{\mu^4} + \frac{6(1-\mstar)}{(1-\mu)^4}$\\ 
Poisson & $\frac{1}{\mu}$ & $\frac{2}{\mu^3}$ & $\frac{6\mstar}{\mu^4}$\\
Geometric & $\frac{1}{\mu(\mu-1)}$ & $- \frac{2}{\mu^3} + \frac{2}{(1-\mu)^3}$ & 
$\frac{6\mstar}{\mu^4} - \frac{6(\mstar+1)}{(\mu+1)^4}$\\
Gamma (fixed $k$) & $\frac{k}{\mu^2}$ & $\frac{6k}{\mu^4}$ &
 $-\frac{6k}{\mu^4} + \frac{24k\mstar}{\mu^5}$\\ 
 Normal (fixed mean) & $\frac{1}{2\mu^2}$ & $\frac{3}{\mu^4}$ &
 $-\frac{3}{\mu^4} + \frac{12\mstar}{\mu^5}$ \\
 Normal (fixed variance) &  $\sigma^2$ & $0$ & $0$\\
\hline
\end{tabular}
\vspace*{-20pt}
\end{footnotesize}
\end{center}
\end{figure}
\section{Future Work}
In future work, we hope to extend our results concerning the slightly squashed
ML estimator to the multi-parameter case and establish almost-sure variation
of Theorem ~\ref{thm:robustML}. We also plan to analyze the estimator 
in the individual sequence framework, along the lines of
\cite{CesaBianchiLugosi06,Raginsky09}.
\bibliographystyle{IEEEtran}

\appendix
\section*{Proof of Theorem~\ref{thm:no_plugin}}
Before we show the main result, we need to prove the following lemmas.

\begin{lemma}
\label{lem:Rissanen_based}
Let $\model = \{\mmu\mid\mu\in\Thetamean\}$ and $\gen = \{P_\mu \mid \mu \in \Thetamean\}$ be
single parameter exponential families with the same sufficient
statistic $X$ and mean-value parameter space $\Thetamean$. Let
$\Theta_0 \subset \Thetamean$ be any non-degenerate closed interval.
Let $X, X_1, X_2, \ldots$ be i.i.d. $\sim P_\mstar$ for some $\mstar \in
\Theta_0$. Let $\mbar_0, \mbar_1,\mbar_2,\ldots$ be a  sequence of estimators, such that $\mibar = \mibar(z^i)$ and $\mibar \in \Theta_0$ for all $i
\geq 1$.  Then, for Lebesgue almost all $\mstar \in \Theta_0$:
$$
\liminf_{n \to \infty} \frac{\sum_{i=0}^{n-1} E_{P_\mstar}[(\mibar - \mstar)^2]
}{\ln n} \geq \underline{V}_{\gen} 
$$  
where $\underline{V}_{\gen} := \inf_{\mu \in \Theta_0}
\var_{P_\mu}X$.
\end{lemma}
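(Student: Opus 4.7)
My plan is to reduce this lemma to Rissanen's information-theoretic lower bound by constructing an explicit prequential source out of the estimators $\mibar$, and then converting the resulting KL-redundancy lower bound into a lower bound on the cumulative squared error via a sharp exponential-family Bregman identity. The argument runs parallel to the proof sketch of Theorem~\ref{thm:no_plugin} in the main text, but ends with a statement about mean-squared error rather than about redundancy.

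First, I would introduce the probabilistic source $Q$ on $\zspace^{\infty}$ defined by the conditionals $Q(z_{i+1} \mid z^i) := P_{\mibar}(z_{i+1})$ for all $i \geq 0$. Since each $\mibar$ is required to lie in $\Theta_0 \subset \Thetamean$, these are legitimate densities and $Q$ is a well-defined source. Rissanen's theorem, as quoted in the proof of Theorem~\ref{thm:no_plugin}, then guarantees that for Lebesgue almost every $\mstar \in \Theta_0$,
$$\liminf_{n \to \infty} \frac{D(P_{\mstar}^{(n)} \,\|\, Q^{(n)})}{\tfrac{1}{2}\ln n} \;\geq\; 1.$$
Applying the chain rule for KL divergence as in (\ref{eq:KL_deriv}) of the main text yields the identity $D(P_{\mstar}^{(n)} \,\|\, Q^{(n)}) = \sum_{i=0}^{n-1} E_{P_\mstar}\!\left[D(P_\mstar \,\|\, P_\mibar)\right]$.

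Second, I would relate the pointwise divergences to squared mean-parameter distances. Since $\frac{d}{d\mu} D(P_\mstar \,\|\, P_\mu) = (\mu - \mstar)\, I_\gen(\mu)$ in the mean-value parameterization (a direct computation from the Bregman representation of $D$ in the natural parameter combined with the standard identity $I_\gen(\mu) = 1/\var_{P_\mu} X$ cited in the proof of Theorem~\ref{thm:no_plugin}), integrating from $\mstar$ to $\mibar$ gives the exact representation
$$D(P_\mstar \,\|\, P_\mibar) \;=\; \int_{\mstar}^{\mibar} (\nu - \mstar)\, I_\gen(\nu)\, d\nu \;\leq\; \frac{\sup_{\nu \in \Theta_0} I_\gen(\nu)}{2}\,(\mibar - \mstar)^2 \;=\; \frac{(\mibar - \mstar)^2}{2\,\underline{V}_{\gen}},$$
where the final equality uses the definition of $\underline{V}_{\gen}$ together with $\sup_{\Theta_0} I_\gen = 1/\inf_{\Theta_0} \var_{P_\nu}X$. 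Continuity of $I_\gen$ on the compact set $\Theta_0$ ensures the supremum is finite. Substituting into the chain-rule decomposition and combining with the Rissanen lower bound yields
$$\liminf_{n \to \infty} \frac{\sum_{i=0}^{n-1} E_{P_\mstar}\!\left[(\mibar - \mstar)^2\right]}{\ln n} \;\geq\; \underline{V}_{\gen}$$
for Lebesgue almost every $\mstar \in \Theta_0$, which is the claim.

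The delicate point is the sharpness of the KL-to-MSE conversion: a naive quadratic Taylor expansion of $D(P_\mstar \,\|\, P_\mu)$ with Lagrange remainder produces a second derivative $I_\gen(\xi) + (\xi - \mstar)\,I_\gen'(\xi)$ and would inflate the effective Fisher-information constant, yielding a lower bound strictly weaker than $\underline{V}_{\gen}$ whenever $I_\gen$ is non-constant on $\Theta_0$. Exchanging the Lagrange remainder for the exact integral representation above removes the $I_\gen'$ term entirely, so that only $\sup_{\Theta_0} I_\gen = 1/\underline{V}_{\gen}$ enters the bound. This is the only place where the single-parameter exponential-family structure of $\gen$ is used in an essential way; everything else is Rissanen's theorem plus the chain rule.
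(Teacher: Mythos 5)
Your proof is correct and follows essentially the same route as the paper's: you define the source $Q$ by the conditionals $Q(z_{i+1}\mid z^i):=P_{\mibar}$, invoke Rissanen's theorem, decompose $D(P_\mstar^{(n)}\|Q^{(n)})$ by the chain rule into $\sum_i E_{P_\mstar}[D(P_\mstar\|P_\mibar)]$, and then convert each term to squared mean-parameter error. The one place you genuinely differ from the paper is the conversion step. The paper expands $D(P_\mstar\|P_\mibar)$ with the Lagrange form of the remainder and then asserts that the second derivative $\expdif{2}(\mu)=\frac{d^2}{d\mu^2}D(P_\mstar\|P_\mu)$ at the intermediate point equals $I_\gen(\mu)=1/\var_{P_\mu}X$; as you observe, that identity actually holds only at $\mu=\mstar$, and in general $\expdif{2}(\mu)=I_\gen(\mu)+I_\gen'(\mu)(\mu-\mstar)$. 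Your replacement of the Lagrange remainder by the exact first-order integral representation $D(P_\mstar\|P_\mibar)=\int_{\mstar}^{\mibar}(\nu-\mstar)I_\gen(\nu)\,d\nu$, and then bounding $I_\gen$ by its supremum on the compact interval $\Theta_0$, makes the inequality $D(P_\mstar\|P_\mibar)\le(\mibar-\mstar)^2/(2\underline{V}_\gen)$ manifestly correct without relying on that imprecise identification. (One could also repair the Lagrange-remainder version by invoking the weighted mean value theorem for integrals to find \emph{some} $\mu'\in\Theta_0$ with $\expdif{2}(\mu)=I_\gen(\mu')$, but the integral form is cleaner and is effectively what the paper's inequality \eqref{eq:squared_approx} needs anyway.) Since $\mstar$ and $\mibar$ both lie in $\Theta_0$, which is an interval, all intermediate $\nu$ lie in $\Theta_0$, so the $\sup_{\Theta_0}I_\gen=1/\underline{V}_\gen$ bound is the right one. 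In short: same structure, same use of Rissanen, but a tighter and fully rigorous treatment of the quadratic approximation.
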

\begin{proof}
The proof is based on a theorem stated by Rissanen \cite{Rissanen86a} (see also
\cite{Grunwald07}, Theorem 14.2), a special case of which says the following.

Let $\gen$ and $\Theta_0$ be defined as above, $P_{\mu}^{(n)}$ be a joint
distribution of $n$ outcomes generated i.i.d. from $P_\mu$, $Q$ be an arbitrary
probabilistic source, i.e. a distribution on infinite sequences $z_1, z_2, \ldots
\in \zspace^{\infty}$, and let $Q^{(n)}$ be its restriction to the first $n$
outcomes (marginalized over $z_{n+1}, z_{n+2},\ldots$). Define:
\begin{equation}
\label{eq:Rissanen_thm}
g_n(\mstar) = \inf_{n' \geq n} \left\{ \frac{D(P^{(n')}_\mstar \|
Q^{(n')})}{\frac{1}{2} \ln n'} \right\}.
\end{equation}
Then for Lebesgue almost all  $\mstar \in \Theta_0$, $\lim_{n \to \infty}
g_n(\mstar) \geq 1$.

We construct the source $Q$ by specifying the conditional probabilities:
$$
Q(z_{n+1}|z^{n}) := P_\mnbar,
$$
for every $n \geq 1$. This definition is valid, because $\mnbar$ depends only
on $z^{n}$. Now, we have:
\begin{eqnarray*}
&&D(P_\mstar^{(n)} \| Q^{(n)}) 
= E_{Z^{n} \sim P^{(n)}_{\mstar}} [\ln P_{\mstar}(Z^{n}) - \ln Q(Z^{n})] \\ 
&=& \sum_{i=0}^{n-1} E_{Z^{i} \sim P_\mstar^{(i)}} \left[ \ln
P_{\mstar}(Z_{i+1}) - \ln Q(Z_{i+1}|Z^{i}) \right] \\ 
&=& \sum_{i=1}^{n-1} E_{Z^{i} \sim P_\mstar^{(i)}} \left[ D(P_{\mstar} \|
P_{\mibar}) \right]. \\
\end{eqnarray*}
Expanding $D(P_{\mstar} \| P_{\mibar})$ into a Taylor series around $\mstar$ yeilds:
$$
D(P_{\mstar} \| P_{\mibar}) = 0 + \expdif{1}(\mstar) (\mibar - \mstar) +
\frac{1}{2} \expdif{2}(\mu) (\mibar - \mstar)^2,
$$
for some $\mu$ between $\mibar$ and $\mstar$, where we abbreviated
$\expdif{k}(\mu) = \frac{d^k}{d\mu^k} D(P_\mstar\|P_\mu)$. The term
$\expdif{1}(\mstar)$ is zero, since $D(\mstar \| \mu)$ as a function of $\mu$ has its
minimum at $\mu=\mstar$ \cite{BarndorffNielsen78}. As is well-known \cite{BarndorffNielsen78}, for
exponential families the term $\expdif{2}(\mu)$ coincides precisely with the
Fisher information $I_{\gen}(\mu)$ evaluated at $\mu$. Another standard result
\cite{BarndorffNielsen78} for the mean-value parameterization says that for all $\mu$,
\begin{equation}\label{eq:efficient}
I_{\gen}(\mu) = \frac{1}{\var_{P_\mu} X}.
\end{equation}
Therefore (using shorter notation $E_{P_\mstar}$ for $E_{Z^{i} \sim
P_\mstar^{(i)}}$):
 %
\begin{equation}
\begin{array}{ll}
&D(P_\mstar^{(n)} \| Q^{(n)}) 
= \frac{1}{2} \sum_{i=0}^{n-1} E_{P_\mstar} \left[
\frac{(\mibar - \mstar)^2}{\var_{P_\mu} X}   \right] \\
&\leq  \frac{1}{2} \frac{1}{\underline{V}_{\gen}} \sum_{i=0}^{n-1}
E_{P_\mstar} \left[ (\mibar - \mstar)^2 \right]. 
\label{eq:squared_approx}
\end{array}
\end{equation}
%
Note, that $\underline{V}_{\gen} > 0$ is an infimum of a
continuous and positive function on a compact set. From
(\ref{eq:Rissanen_thm}) and (\ref{eq:squared_approx}) we have:
$$
\inf_{n' \geq n} \left\{ \frac{\frac{1}{2} \sum_{i=0}^{n'-1} E_{P_{\mstar}}
\left[ (\mibar - \mstar)^2  \right]}{\frac{1}{2} \ln n'} \right\} \geq
 g_n(\mstar) \underline{V}_{\gen},
$$
and thus  Rissanen's theorem proves the lemma.
\end{proof}

\begin{lemma}
\label{lemm:variance_ratio}
Let $\model,\gen,\Theta_0,X,X_1,X_2,\ldots$, be defined as in Lemma
\ref{lem:Rissanen_based}. Let $U$ denote any plug-in model with
respect to $\model$ based on a sequence of estimators
$\mbar_1,\mbar_2,\ldots$ (notice that now we do not restrict $\mibar$ to be in $\Theta_0$, as in Lemma
\ref{lem:Rissanen_based}).
Then, for Lebesgue almost all $\mstar \in \Theta_0$:
$$
\liminf_{n \to \infty} \frac{\sum_{i=0}^{n-1}
E_{P_{\mstar}}[D(M_{\mstar}\|M_\mibar)]}{\ln n} \geq \frac{1}{2}
\frac{\underline{V}_\gen}{\overline{V}_\model},
$$
for $\underline{V}_{\gen} := \inf_{\mu \in \Theta_0}
\var_{P_\mu}X$ and $\overline{V}_{\model} := \sup_{\mu \in \Theta_0} \var_{M_\mu}X$.
\end{lemma}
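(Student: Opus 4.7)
The plan is to reduce this lemma to Lemma~\ref{lem:Rissanen_based} via two moves. The previous lemma controls a sum of squared deviations of an estimator that is forced to lie in $\Theta_0$; here we must control a sum of $\model$-divergences based on an estimator sequence that may wander outside $\Theta_0$. Both gaps are closed using elementary properties of one-parameter exponential families.

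First, I would replace each $\mibar$ by its projection $\mibar' := \arg\min_{\mu\in\Theta_0}|\mu-\mibar|$ onto the closed interval $\Theta_0$. This is again a valid estimator (a function of $z^i$) and takes values in $\Theta_0$. Because $\mstar\in\Theta_0$, we have $|\mibar'-\mstar|\leq|\mibar-\mstar|$. Since $\frac{d}{d\mu}D(M_\mstar\|\mmu)|_{\mu=\mstar}=0$ and $\frac{d^2}{d\mu^2}D(M_\mstar\|\mmu)=I_\model(\mu)>0$ throughout $\Thetamean$, the map $\mu\mapsto D(M_\mstar\|\mmu)$ is strictly increasing in $|\mu-\mstar|$; hence $D(M_\mstar\|M_{\mibar'})\leq D(M_\mstar\|M_\mibar)$ pointwise, and it suffices to lower bound the sum with $\mibar'$ in place of $\mibar$.

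Next, I would Taylor-expand $D(M_\mstar\|M_{\mibar'})$ around $\mstar$ with Lagrange remainder, obtaining
$$ D(M_\mstar\|M_{\mibar'}) = \tfrac{1}{2}\,I_\model(\tilde\mu)\,(\mibar'-\mstar)^2 $$
for some $\tilde\mu$ lying between $\mibar'$ and $\mstar$, and therefore in $\Theta_0$. Using $I_\model(\mu)=1/\var_{M_\mu}X$ together with the definition of $\overline{V}_\model$ yields the uniform pointwise bound $D(M_\mstar\|M_{\mibar'}) \geq (\mibar'-\mstar)^2/(2\overline{V}_\model)$. Summing over $i$ and taking expectations under $P_\mstar$,
$$ \sum_{i=0}^{n-1} E_{P_\mstar}\!\left[D(M_\mstar\|M_\mibar)\right] \;\geq\; \frac{1}{2\overline{V}_\model}\sum_{i=0}^{n-1} E_{P_\mstar}\!\left[(\mibar'-\mstar)^2\right]. $$
Lemma~\ref{lem:Rissanen_based}, applied to the projected sequence $\mibar'$ (which satisfies its hypothesis by construction), gives $\liminf_n\sum_{i=0}^{n-1}E_{P_\mstar}[(\mibar'-\mstar)^2]/\ln n \geq \underline{V}_\gen$ for Lebesgue-almost every $\mstar\in\Theta_0$. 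Dividing by $\ln n$ and combining with the previous inequality produces the claimed lower bound $\underline{V}_\gen/(2\overline{V}_\model)$.

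The main obstacle is the projection step: I must be certain that truncating $\mibar$ toward $\Theta_0$ never \emph{increases} the $\model$-divergence being bounded. The enabling structural fact is strict positivity of the Fisher information on all of $\Thetamean$---automatic for regular exponential families---which makes $D(M_\mstar\|\mmu)$ strictly unimodal in $\mu$ with minimum at $\mstar$, and hence monotone in $|\mu-\mstar|$. Once this is in place, the Taylor step and the invocation of Lemma~\ref{lem:Rissanen_based} are routine.
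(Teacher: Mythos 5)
Your proposal is correct and follows the same route as the paper: project $\mibar$ onto $\Theta_0$ to get $\mibar'$, argue that the projection cannot increase $D(M_\mstar\|\cdot)$, Taylor-expand with a Lagrange remainder point $\tilde\mu$ that lies in $\Theta_0$, lower bound $I_\model(\tilde\mu)$ by $1/\overline{V}_\model$, and invoke Lemma~\ref{lem:Rissanen_based} on the projected estimators.

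One small inaccuracy worth fixing in the write-up: the map $\mu\mapsto D(M_\mstar\|\mmu)$ is \emph{not} in general monotone in $|\mu-\mstar|$, since KL divergence is asymmetric; it is only strictly convex with unique minimum at $\mstar$, hence monotone on each of the half-lines $(-\infty,\mstar]$ and $[\mstar,\infty)$ separately. The conclusion you need nonetheless holds, because $\mibar'$ always lies on the segment between $\mstar$ and $\mibar$ (so on the same side of $\mstar$ as $\mibar$), and moving from $\mibar$ toward $\mstar$ along that segment decreases the divergence. This is exactly how the paper justifies the step, writing $\mibar'=\lambda\mstar+(1-\lambda)\mibar$ for some $\lambda\in[0,1]$ and observing that $D(M_\mstar\|M_{\lambda\mstar+(1-\lambda)\mibar})$ is decreasing in $\lambda$.
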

\begin{proof}
Let us denote $\Theta_0 = [\mu_0,\mu_1]$. 
We define a truncated sequence of estimators
$(\mibar')$ as follows:

$$
\mibar' = \left\{
\begin{array}{ll}
 \mu_1 & {\rm if } \;\; \mibar \geq \mu_1 \\
 \mibar & {\rm if } \;\; \mu_0 <
 \mibar < \mu_1 \\ \mu_0 & {\rm if } \;\; \mibar \leq \mu_0
\end{array}
\right.,
$$
so that $\mibar' \in \Theta_0$. Note, that $D(M_{\mstar}\|M_{\mibar}) \geq
D(M_{\mstar}\|M_{\mibar'})$, as there exists $\lambda \in [0,1]$ such
that we can express $\mibar' = \lambda \mstar + (1-\lambda) \mibar$ and
$D(M_{\mstar}\|M_{\lambda \mstar + (1-\lambda) \mibar})$ is strictly
decreasing in $\lambda$ \cite{Grunwald07}. Using this fact and expanding 
$D(M_{\mstar}\|M_{\mibar'})$ into Taylor series as in Lemma
\ref{lem:Rissanen_based}, we get:
\begin{eqnarray*}
&&E_{P_{\mstar}}[D(M_{\mstar}\|M_\mibar)] \geq 
E_{P_{\mstar}}[D(M_{\mstar}\|M_\mibar')] \\ 
&=& \frac{1}{2} E_{P_\mstar} \left[
\frac{(\mibar - \mstar)^2}{\var_{M_\mu} X} \right] 
\geq \frac{1}{2} \frac{1}{\overline{V}_{\model}} E_{P_\mstar}
\left[ (\mibar - \mstar)^2 \right]. 
\end{eqnarray*}
Summing over $i=0,\ldots,n-1$ and using Lemma \ref{lem:Rissanen_based} finishes
the proof.
\end{proof}

Before we prove Theorem~\ref{thm:no_plugin}, we further need a simple lemma to rewrite the 
redundancy in a more convenient form: 
\begin{lemma}
Let $U$ and $\model$ be defined as in Theorem~\ref{thm:no_plugin}.
We have:
$$
\redundancy_U(n)=\sum_{i=0}^{n-1}\,E_{P_\mstar}\left[D(\mmstar\parallel
M_\mibar)\right].$$
\label{lem:redundancy}
\end{lemma}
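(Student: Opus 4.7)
The plan is to reduce the relative redundancy to a per-step log-loss difference and then exploit the exponential-family structure shared by $\gen$ and $\model$ to convert the resulting expectation under $P_\mstar$ into an expectation under $\mmstar$.

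First I would invoke the fact (quoted in the excerpt from \cite[Ch.~17]{Grunwald07}) that since $E_{P_\mstar}[X]=\mstar\in\Thetamean$, $\mmstar$ attains $\inf_{\mu\in\Thetamean}E_{P_\mstar}[-\ln\mmu(Z)]$; so by (\ref{eq:redb}), $\redundancy_U(n)=E_{P_\mstar}[L_U(Z^n)]-E_{P_\mstar}[-\ln\mmstar(Z^n)]$. Expanding $L_U$ via (\ref{eq:codelength}) and $-\ln\mmstar(Z^n)$ as a sum of per-outcome log-losses (using the i.i.d.\ assumption) then gives
$$\redundancy_U(n)=\sum_{i=0}^{n-1}E_{P_\mstar}\!\left[\ln\frac{\mmstar(Z_{i+1})}{M_{\mibar}(Z_{i+1})}\right],$$
and, since $\mibar$ depends only on $Z^i$ and is therefore independent of $Z_{i+1}$, it suffices to show that for each fixed value $\mibar=m\in\Thetamean$ the inner single-outcome expectation equals $D(\mmstar\|M_m)$.

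The crucial observation is that by Definition~\ref{def:expfam} the log-density ratio
$$\ln\frac{\mmstar(z)}{M_{m}(z)}=(\eta_{m}-\eta_{\mstar})X(z)+\ln\partf(\eta_{m})-\ln\partf(\eta_{\mstar})$$
is affine in the sufficient statistic $X(z)$, where $\eta_\mu$ denotes the natural parameter corresponding to mean-value parameter $\mu$. Because $\gen$ and $\model$ share both the sufficient statistic $X$ and the mean-value parameter space $\Thetamean$, $E_{P_\mstar}[X]=\mstar=E_{\mmstar}[X]$; consequently the expectation of any affine function of $X$ under $P_\mstar$ coincides with its expectation under $\mmstar$. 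Applying this to the displayed log-ratio yields $E_{P_\mstar}[\ln(\mmstar(Z)/M_m(Z))]=E_{\mmstar}[\ln(\mmstar(Z)/M_m(Z))]=D(\mmstar\|M_m)$, and substituting back (with $m=\mibar$ and then taking outer expectation over $Z^i$) gives the claim.

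I do not foresee any real obstacle. The one step that needs care is precisely the conversion from a $P_\mstar$-expectation to an $\mmstar$-expectation: it is tempting but incorrect to identify $E_{P_\mstar}[\ln(\mmstar(Z)/M_\mibar(Z))]$ with $D(\mmstar\|M_\mibar)$ directly, and making the identification rigorous via the affine form of the log-density ratio (which in turn relies on $P_\mstar$ and $\mmstar$ sharing the mean value of $X$) is the one substantive ingredient of the proof.
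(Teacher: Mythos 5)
Your argument is correct and uses exactly the ``standard result in the theory of exponential families'' that the paper cites (the so-called robustness property: since $\ln\bigl(\mmstar(z)/M_m(z)\bigr)$ is affine in $X(z)$ and $E_{P_\mstar}[X]=E_{\mmstar}[X]=\mstar$, the $P_\mstar$-expectation of the log-ratio equals the $\mmstar$-expectation, i.e.\ $D(\mmstar\|M_m)$). You also correctly identify the one genuine subtlety — that one cannot simply equate the $P_\mstar$-expectation of the log-ratio with a KL divergence — so the proposal matches the paper's intended proof route.
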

The usefulness of this lemma comes from the fact that the KL
divergence $D(\cdot \| \cdot)$ is defined as an expectation over
$M_{\mu^*}$ rather than $P_{\mu*}$. 
The proof makes use of a standard result in the theory of exponential families
and can be found e.g. in \cite{Grunwald07} (see also related Lemma 1 in
\cite{GrunwaldD05}).

\begin{proof} {\em (of Theorem~\ref{thm:no_plugin})}
Choose any $\mstar \in \Theta$ and span around it a non-degenerate closed
interval $\Theta'_{\mstar} \subset \Thetamean$, so that $\mstar \in {\rm int}
\Theta'_{\mstar}$. Fix some $\epsilon > 0$. It follows from
general properties of exponential families (see, e.g., \cite{BarndorffNielsen78}) that $\var_{M_\mu}X$ and $\var_{P_\mu}X$
are continuous (with respect to $\mu$), therefore if we choose the interval
$\Theta'_{\mstar}$ small enough, we will have
$\frac{\underline{V}_\gen}{\overline{V}_\model} > \frac{\var_{P_\mstar}X}{\var_{M_\mstar}X} - \epsilon$, with 
$\underline{V}_{\gen} := \inf_{\mu \in \Theta'_\mstar} \var_{P_\mu}X$ and
$\overline{V}_{\model} := \sup_{\mu \in \Theta'_\mstar} \var_{M_\mu}X$.
Using Lemma \ref{lemm:variance_ratio} with $\Theta_0 = \Theta'_\mstar$, and
Lemma \ref{lem:redundancy}, we have for Lebesgue almost all $\mu \in
\Theta'_{\mu^*}$.
$$
\liminf_{n \to \infty} \frac{\mathcal{R}_U(n)}{\ln n} \geq \frac{1}{2}
\frac{\underline{V}_\gen}{\overline{V}_\model} > \frac{1}{2} \left(
  \frac{\var_{P_\mstar}X}{\var_{M_\mstar}X} - \epsilon \right).
$$
Note, that w.l.o.g. $\Theta'_\mstar$ can be chosen to have rational ends. The
family of all intervals $\Theta'_\mstar \subset \Thetamean$ with rational ends and rational $\mstar$,
i.e. $\Xi = \{ \Theta'_\mstar = [\mu_0,\mu_1] \mid \mstar, \mu_0, \mu_1 \in
\Thetamean \cap \fractions \}$, is countable and covers $\Thetamean$,
$\bigcup_{\Theta'_\mstar \in \Xi} \Theta'_\mstar =
\Thetamean$. Therefore,
\begin{multline}
\label{eq:snow}
\text{For Lebesgue almost all $\mstar \in \Thetamean$}:\\
\liminf_{n \to \infty} \frac{\mathcal{R}_U(n)}{\ln n} >
\frac{\var_{P_\mstar}X}{\var_{M_\mstar}X} - \epsilon
\end{multline}
Since this holds for every $\epsilon > 0$, this also  means that $\liminf_{n \to
\infty} \frac{\mathcal{R}_U(n)}{\ln n} \geq
\frac{\var_{P_\mstar}X}{\var_{M_\mstar}X}$ for 
Lebesgue almost all $\mstar \in \Thetamean$. To show this, assume the contrary,
that the set $A=\left\{\mstar \colon \liminf_{n \to
\infty} \frac{\mathcal{R}_U(n)}{\ln n} <
\frac{\var_{P_\mstar}X}{\var_{M_\mstar}X}\right\}$ has positive Lebesgue
measure, $L(A) > 0$. Let $\epsilon_1,\epsilon_2,\ldots$ be any sequence
of positive numbers converging to $0$ and let us define $A_i = \left\{\mstar
\colon \liminf_{n \to \infty} \frac{\mathcal{R}_U(n)}{\ln n} <
\frac{\var_{P_\mstar}X}{\var_{M_\mstar}X} - \epsilon_i \right\}$. Obviously,
$A_1 \subset A_2 \subset \ldots$, and $\bigcup_{i} A_i = A$. From continuity of
measure, we must have $L(A_i) > 0$ for $i$ large enough, which is a
contradiction with (\ref{eq:snow}). The theorem is proved.
\end{proof}

\section*{Proof of Theorem~\ref{thm:robustML}}
We will make use of the following two
theorems, proofs of which can be found in \cite{GrunwaldD05}.
%
%
\begin{theorem}\label{thm:devavasym}
  Let $X, X_1, \ldots$ be i.i.d., let
  $\mnhat:=(n_0\cdot x_0+\sum_{i=1}^n X_i)/(n+n_0)$ and $\mstar=E[X]$.
  If the first $k$ moments of $X$ exist, then
  $E[(\mnhat-\mstar)^k]=O(n^{-\ceil{k\over2}})$.
\end{theorem}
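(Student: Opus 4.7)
The plan is to decompose the estimator's error into a constant offset plus a centred i.i.d.\ sum, expand the $k$-th power via the binomial theorem, and reduce the problem to bounding the moments of a centred random walk by an elementary combinatorial argument on multi-indices.

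Concretely, set $Y_i := X_i - \mstar$, so that $E[Y_i]=0$ and the first $k$ moments of $Y$ are finite by hypothesis. Writing $S_n := \sum_{i=1}^n Y_i$ and $c := n_0(x_0 - \mstar)$, I would observe that
\[
\mnhat - \mstar \;=\; \frac{c + S_n}{n+n_0},
\]
so the binomial theorem gives
\[
E\!\left[(\mnhat-\mstar)^k\right] \;=\; \frac{1}{(n+n_0)^k}\sum_{j=0}^k \binom{k}{j}\, c^{k-j}\, E[S_n^j].
\]
Since $c$ is a constant not depending on $n$, the factors $c^{k-j}$ are $O(1)$, and the whole problem reduces to a uniform bound on $E[S_n^j]$ for $j \le k$.

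The central step is to establish $E[S_n^j] = O\!\left(n^{\floor{j/2}}\right)$ for each $0 \le j \le k$. I would expand $S_n^j$ as a multi-sum over $(i_1,\ldots,i_j)\in\{1,\ldots,n\}^j$: by independence and the vanishing first moment, any multi-index in which some index value appears exactly once contributes zero, so the only surviving terms are those in which every distinct index value occurs with multiplicity at least two. Such multi-indices involve at most $\floor{j/2}$ distinct values, so there are $O\!\left(n^{\floor{j/2}}\right)$ of them, and each surviving expectation factorises into a bounded product of moments of $Y$ of order $\le j \le k$, all finite by assumption. This yields the claimed bound.

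Substituting back, the dominant contribution comes from $j=k$ (and, when $k$ is odd, also from $j=k-1$), both giving $\floor{j/2}=\floor{k/2}$, so
\[
E\!\left[(\mnhat-\mstar)^k\right] \;=\; O\!\left(\frac{n^{\floor{k/2}}}{n^k}\right) \;=\; O\!\left(n^{-\ceil{k/2}}\right),
\]
using $k - \floor{k/2} = \ceil{k/2}$. The main obstacle is really just the moment bound on $E[S_n^j]$: when $j$ is even the surviving multi-indices can consist purely of pairs, but when $j$ is odd at least one index must appear with multiplicity $\ge 3$, which only tightens the count. The argument is classical and essentially equivalent to the Marcinkiewicz--Zygmund inequality for centred i.i.d.\ sums, which could alternatively be cited directly.
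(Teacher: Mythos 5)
Your proof is correct and self-contained. Note that the paper does not prove Theorem~\ref{thm:devavasym} at all: it is imported from \cite{GrunwaldD05} (``\ldots proofs of which can be found in \cite{GrunwaldD05}''), so there is no in-paper argument to compare against; your write-up supplies what the paper treats as a black box.

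The reduction $\mnhat - \mstar = (c + S_n)/(n+n_0)$ with $S_n = \sum_i (X_i - \mstar)$ and $c = n_0(x_0 - \mstar)$, followed by a binomial expansion and the moment bound $E[S_n^j] = O(n^{\floor{j/2}})$ obtained by counting surviving multi-indices, is the right argument and is carried out correctly: any multi-index in which some value occurs exactly once contributes zero, so survivors use at most $\floor{j/2}$ distinct values, giving $O(n^{\floor{j/2}})$ of them, each contributing a bounded product of centred moments of order at most $j \le k$. One caveat on your closing aside: the Marcinkiewicz--Zygmund inequality gives only $E[|S_n|^j] = O(n^{j/2})$, which for odd $j$ is strictly weaker than the bound $E[S_n^j] = O(n^{\floor{j/2}})$ you actually use; the absolute moment genuinely scales like $n^{j/2}$ while the signed moment benefits from cancellation. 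For odd $k$ (the paper invokes the theorem with $k=1$, for instance) this distinction matters: MZ alone would yield $O(n^{-k/2})$, not the claimed $O(n^{-\ceil{k/2}})$. So your combinatorial argument cannot in fact be replaced by a bare citation of Marcinkiewicz--Zygmund, and it is doing essential work.
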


\begin{theorem}\label{thm:boundprob}
Let $X, X_1, \ldots$ be i.i.d. random variables, define
  $\mnhat:=(n_0\cdot x_0+\sum_{i=1}^n X_i)/(n+n_0)$ and
  $\mstar=E[X]$. Let $k\in\{0,2,4,\ldots\}$. If the first $k$ moments
  exists then
  $P(|\mnhat-\mstar|\ge\delta)=O\left(n^{-\ceil{k\over2}}\delta^{-k}\right)$.
\end{theorem}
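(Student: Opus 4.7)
The plan is to reduce the tail bound to Markov's inequality applied to the $k$th central moment, and then quote Theorem~\ref{thm:devavasym} to estimate that moment. The key observation is that $k$ is restricted to the even non-negative integers, so that $(\mnhat-\mstar)^k = |\mnhat-\mstar|^k$ is a non-negative random variable; this is exactly what lets us turn a $k$th moment bound into a tail bound without any absolute-value gymnastics.

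First I would dispose of the trivial case $k=0$: the claimed bound becomes $P(|\mnhat-\mstar|\ge\delta) = O(1)$, which is immediate since any probability is at most $1$. For $k\in\{2,4,6,\ldots\}$, I would note that the event $\{|\mnhat-\mstar|\ge \delta\}$ is identical to the event $\{(\mnhat-\mstar)^k \ge \delta^k\}$ because $k$ is even and $\delta>0$. Markov's inequality applied to the non-negative variable $(\mnhat-\mstar)^k$ then gives
\[
P\bigl(|\mnhat-\mstar|\ge \delta\bigr) \;\le\; \frac{E\bigl[(\mnhat-\mstar)^k\bigr]}{\delta^k}.
\]
Invoking Theorem~\ref{thm:devavasym}, whose hypothesis (existence of the first $k$ moments of $X$) is exactly the hypothesis assumed here, the numerator is $O(n^{-\lceil k/2\rceil})$, and combining these two estimates yields
\[
P\bigl(|\mnhat-\mstar|\ge\delta\bigr)\;=\;O\!\left(n^{-\lceil k/2\rceil}\delta^{-k}\right),
\]
as required.

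There is no real obstacle: the argument is a one-line Markov bound, and everything nontrivial is hidden inside Theorem~\ref{thm:devavasym}. The only minor point to double-check is that the ceiling $\lceil k/2\rceil$ in the statement matches $k/2$ here since $k$ is even, which is immediate; the ceiling notation is kept only so that the conclusion is written in the same form as in Theorem~\ref{thm:devavasym}. Since the argument is independent of $n_0$ and $x_0$ (they enter only through the moment bound, which already absorbs them into its $O(\cdot)$ constant), no extra bookkeeping on the regularization parameters is required.
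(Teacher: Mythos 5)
Your proof is correct and takes the standard (and, given Theorem~\ref{thm:devavasym}, essentially the only natural) route: a generalized Chebyshev/Markov bound applied to the non-negative random variable $(\mnhat-\mstar)^k$, which is valid precisely because $k$ is even, combined with the $k$th-moment estimate from Theorem~\ref{thm:devavasym}. The paper does not reproduce a proof of this statement --- it defers to~\cite{GrunwaldD05} --- but the argument there is the same Markov-inequality reduction, so your proposal matches the intended proof in both approach and detail, including the correct handling of the trivial $k=0$ case and the observation that the parity of $k$ is what makes the moment bound directly usable.
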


Before we prove the main theorem, we need the following lemma:
\begin{lemma}
Fix any $s \in \{0,2,4\}$. Let $f(\mu)$ be some continuous function of $\mu$. Suppose
it holds for both $T:=X$ and $T:=-X$ that:
\begin{itemize}
 \item If $T$ is unbounded from above then there is a $k \in \{4,6,\ldots\}$ such that the first $k$ moments of $T$ exist under $P$ and that $f(\mu) = O(\mu^{k-s-2})$.
\item If $T$ is bounded from above by a constant $g$ then $f(\mu)$ is polynomial in $1 / (g - \mu)$.
\end{itemize}
Then the expression $E_P[f(\mu) (\mihat - \mstar)^s]$, for $\mu$ between
$\mstar$ and $\mihat$, is of order $O(i^{-s/2})$.
\label{lemm:bounding}
\end{lemma}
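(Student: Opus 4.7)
My plan is to split the expectation according to the event $G_i=\{|\mihat-\mstar|\le\delta\}$ for a small fixed $\delta>0$, and to handle the two endpoints of $\Thetamean$ symmetrically via the paired ``$T:=X$'' and ``$T:=-X$'' halves of the hypothesis. Throughout, $\mu$ denotes the random point lying between $\mstar$ and $\mihat$.

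The inside-$G_i$ contribution is easy. Since $\mstar$ is interior to $\Thetamean$, I pick $\delta$ small enough that $[\mstar-\delta,\mstar+\delta]$ sits strictly inside $\Thetamean$; on $G_i$, $\mu$ is confined to that compact interval, so continuity of $f$ gives $|f(\mu)|\le C_\delta$. Hence
\[
\bigl|E_P[f(\mu)(\mihat-\mstar)^s\mathbf{1}_{G_i}]\bigr|\;\le\;C_\delta\,E_P[|\mihat-\mstar|^s]\;=\;O(i^{-s/2})
\]
by Theorem~\ref{thm:devavasym}, whose moment hypothesis is satisfied (in the unbounded case the first $k\ge s$ moments of $X$ exist; in the bounded case all moments exist).

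On $G_i^c$ the work splits according to whether each endpoint of $\Thetamean$ is bounded. Near an unbounded endpoint (say $X$ is unbounded above), the growth hypothesis $f(\mu)=O(|\mu|^{k-s-2})$ and continuity give $|f(\mu)|\le C(1+|\mu|^{(k-s-2)_+})$, and $|\mu|\le|\mstar|+|\mihat-\mstar|$ then yields $|f(\mu)|\le C'(1+|\mihat-\mstar|^{(k-s-2)_+})$, so (without even needing the event split)
\[
E_P\bigl[|f(\mu)|(\mihat-\mstar)^s\bigr]\;\le\;C'\bigl(E_P[|\mihat-\mstar|^s]+E_P[|\mihat-\mstar|^{\max(s,k-2)}]\bigr)\;=\;O(i^{-s/2})
\]
by Theorem~\ref{thm:devavasym}, since $\max(s,k-2)\le k$ moments of $X$ exist and $\max(s,k-2)\ge s$. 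Near a bounded endpoint (say $X\le g$), the explicit form of $\mihat$ gives the deterministic bound $g-\mihat\ge(g-x_0)n_0/(i+n_0)=\Omega(1/i)$, and $\mstar<g$ forces $g-\mu=\Omega(1/i)$; hence the polynomial-in-$1/(g-\mu)$ hypothesis gives $|f(\mu)|\le Ci^p$ for the relevant polynomial degree $p$. Since $X$ bounded has all moments, Theorem~\ref{thm:boundprob} gives $P(G_i^c)=O(i^{-k'/2})$ for arbitrary $k'$; Cauchy--Schwarz together with $E_P[(\mihat-\mstar)^{2s}]=O(i^{-s})$ (Theorem~\ref{thm:devavasym}) then yields $E_P[|f(\mu)(\mihat-\mstar)^s|\mathbf{1}_{G_i^c}]=O(i^{p-s/2-k'/2})=O(i^{-s/2})$ once $k'\ge 2p$. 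The opposite endpoint is treated identically by applying the hypothesis with $T:=-X$ under the reparameterization $X\mapsto-X$, $\mu\mapsto-\mu$.

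The main obstacle is the bounded case: the crude $O(i^p)$ bound coming from the $1/(g-\mu)$ blow-up of $f$ has to be outpaced by the polynomial tail decay of $\mihat-\mstar$ in Theorem~\ref{thm:boundprob}; this works only because boundedness of $X$ gives \emph{all} moments, making $k'$ arbitrarily large. A secondary book-keeping nuisance is that on $G_i^c$ one ideally splits further according to whether $\mihat>\mstar+\delta$ or $\mihat<\mstar-\delta$, so that only one endpoint's hypothesis is in play at a time, but this is routine.
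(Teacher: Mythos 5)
Your overall strategy (split by the size of $|\mihat-\mstar|$, use continuity of $f$ near $\mstar$, use moment/tail bounds far from $\mstar$) is the same as the paper's, and your handling of the interior event and the unbounded endpoint is essentially correct and a bit cleaner than the paper's $\Delta_j$ decomposition. But there is a genuine gap in your bounded case.

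You write ``Since $X$ bounded has all moments, Theorem~\ref{thm:boundprob} gives $P(G_i^c)=O(i^{-k'/2})$ for arbitrary $k'$.'' This is false as stated: the hypothesis only gives $X\le g$ from above; $X$ may be unbounded from below. In that situation the only moment guarantee available is the finite $k$ coming from the $T:=-X$ clause, and Theorem~\ref{thm:boundprob} then gives at best $P(G_i^c)=O(i^{-k/2})$ with $k$ fixed by the hypothesis. Since the degree $p$ of the $1/(g-\mu)$ blow-up of $f$ is in general unrelated to that $k$, you cannot conclude $p-s/2-k'/4\le -s/2$ by taking $k'$ large, and the argument breaks down. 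The paper fixes precisely this: it truncates $X$ from below at $x^-$, defines $Y_i:=\max\{x^-,X_i\}$ with all moments and sample mean $\mitilde\ge\mihat$, and applies Theorem~\ref{thm:boundprob} to the truncated sequence (after checking $a+\mstar-\mdagger>0$ by choosing $x^-$ small). Your proof needs that truncation step (or an equivalent one-sided concentration argument) to close the bounded case. Separately, a minor arithmetic slip: Cauchy--Schwarz with $P(G_i^c)=O(i^{-k'/2})$ yields a factor $O(i^{-k'/4})$, not $O(i^{-k'/2})$, so you would need $k'\ge 4p$ rather than $k'\ge 2p$; this would be harmless if $k'$ were really free, but it is not once the moment issue above is acknowledged.
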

\begin{proof}
The proof follows very closely part of the proof of Lemma 2 in
\cite{GrunwaldD05}; we nevertheless give here a complete
proof for the sake of clarity.

Let us denote $\di := \mihat - \mstar$. We distinguish a number of regions in
the value space of $\di$: let $\Delta_-=(-\infty,0)$ and let $\Delta_0=[0,a)$
for some constant value $a>0$. If the individual outcomes $X$ are bounded on the
right hand side by a value $g$ then we require that $a<g$ and we
define $\Delta_1=[a,g)$; otherwise we define $\Delta_j=[a+j-1, a+j)$
for $j\ge1$. Now we want to analyze asymptotic behavior of:
$$
E_P \left[f(\mu) \di^s \right] 
= \sum_j P(\di\in\Delta_j) E_P \left[f(\mu) \di^s\mid\di\in\Delta_j\right].
$$
If we can establish the proper asymptotic behavior $O(i^{-s/2})$ for all
regions $\Delta_j$ for $j\ge0$, then we can use a symmetrical argument to
establish the behavior for $\Delta_-$ as well, so it suffices if we restrict
ourselves to $j\ge0$. First we show it for $\Delta_0$. In
this case, the basic idea is that since the remainder $f(\mu)$ is
well-defined over the interval $\mstar\le\mu<\mstar+a$, we can bound
it by its extremum on that interval, namely
$m:=\sup_{\mu\in[\mstar,\mstar+a)}\left|f(\mu)\right|$. Now we get:
$$
  \left|P(\di\in\Delta_0)E\left[f(\mu)\di^s
  \mid\di\in\Delta_0\right]\right| ~\le~ 1\cdot
  E\left[\di^s\left|f(\mu)\right|\right],
$$ which is less or equal than $mE\left[\di^s\right]$.
  Using Theorem~\ref{thm:devavasym} we
find that $E[\di^s]$ is $O(i^{-s/2})$, which is what we want.
Theorem~\ref{thm:devavasym} requires that the first four moments of
$P$ exist, but this is guaranteed to be the case: either the outcomes
are bounded from both sides, in which case all moments necessarily
exist, or the existence of the required moments is part of the
condition on the main theorem.

Now we distinguish between the unbounded and bounded
cases. First we assume  $X$ is unbounded from above. In
this case, we must show, hat:
\begin{equation}
\label{eq:unbounded_case}
  \sum_{j=1}^\infty
  P(\di\in\Delta_j)E\left[f(\mu) \di^s\mid\di\in\Delta_j\right] = O(i^{-s/2})
\end{equation}
We bound this expression from above. The $\di$ in the expectation is
at most $a+j$. Furthermore $f(\mu)=O(\mu^{k-s-2})$ by assumption, 
where $\mu\in[a+j-1,a+j)$. Depending on $k$ and $s$, both
boundaries could maximize this function, but it is easy to check that
in both cases the resulting function is $O(j^{k-s-2})$. So we bound
(\ref{eq:unbounded_case}) from the above by:
$$
\sum_{j=1}^\infty P(\left|\di\right|\ge
  a+j-1)(a+j)^	s O(j^{k-s-2}).
$$
Since we know from the condition on the main theorem that the first
$k\ge4$ moments exist, we can apply Theorem~\ref{thm:boundprob} 
to find that $P(|\di|\ge
a+j-1)=O(i^{-\ceil{k\over2}}(a+j-1)^{-k})=O(i^{-{k\over2}})O(j^{-k})$
(since $k$ has to be even); plugging this into the equation and
simplifying we obtain $ O(i^{-{k\over2}})\sum_j O(j^{-2})$, which is of order
$O(i^{-s/2})$, since the sum $\sum_j O(j^{-2})$ converges and $k \geq s$. 

Now we consider the case where the outcomes are bounded from above by
$g$.  This case is more complicated, since now we have made no extra
assumptions as to existence of the moments of $P$. Of course, if the
outcomes are bounded from both sides, then all moments necessarily
exist, but if the outcomes are unbounded from below this may not be
true. To remedy this, we map all outcomes into a new
domain in such a way that all moments of the transformed variables are
guaranteed to exist. Any constant $x^-$ defines a mapping
$g(x):=\max\{x^-,x\}$. We define the random variables
$Y_i:=g(X_i)$, the initial outcome $y_0:=g(x_0)$ and the mapped
analogues of  $\mstar$ and $\mihat$, respectively: $\mdagger$ is defined
as the mean of $Y$ under $P$ and $\mitilde:=(y_0\cdot n_0+\sum_{j=1}^i
Y_j)/(i+n_0)$. Since $\mitilde\ge\mihat$, we can bound:
\begin{eqnarray*}
P(\di\in\Delta_1)\hskip-25pt&&\left|  E\left[f(\mu) \di^s
\mid\di\in\Delta_1\right] \right| \\ 
&\le& P(\mihat-\mstar\ge a) \sup_{\di\in\Delta_1}\left|f(\mu) \di^s \right| \\
  &\le& P(|\mitilde-\mdagger|\ge
  a+\mstar-\mdagger)g^s\sup_{\di\in\Delta_1}\left|f(\mu)\right|%
\end{eqnarray*}
By choosing $x^-$ small enough, we can bring $\mdagger$ and $\mstar$
arbitrarily close together; in particular we can choose $x^-$ such
that $a+\mstar-\mdagger>0$ so that application of
Theorem~\ref{thm:boundprob} is safe. It reveals that the summed
probability is $O(i^{-{k\over2}})$ for any even $k\in\nats$. Now we
bound $f(\mu)$ which is $O((g-\mu)^{-m})$ for some
$m\in\nats$ by the condition on the main theorem. Here we use that
$\mu\le\mihat$; the latter is maximized if all outcomes equal the
bound $g$, in which case the estimator equals
$g-n_0(g-x_0)/(i+n_0)=g-O(i^{-1})$. Putting all of this together, we
get $\sup\left|f(\mu)\right|=O((g-\mu)^{-m})=O(i^m)$; if we
plug this into the equation we obtain:
\begin{displaymath}
  \ldots~\le~\sum_i O(i^{-{k\over2}})g^sO(i^m)=g^s\sum_i O(i^{m-{k\over2}})
\end{displaymath}
This is of order $O(i^{-s/2})$ if we choose $k\ge6m+s$. We can do this because
the construction of $g(\cdot)$ ensures that all moments exist,
and therefore certainly the first $6m+s$ moments.
\end{proof}

We can now proceed to prove the theorem:

\begin{proof}{\em (of Theorem \ref{thm:robustML})}
We  express the relative redundancy of the slightly squashed
ML plug-in code $U$ by the sum of the relative redundancy of the ordinary
ML plug-in code $\Uhat$ and the difference in expected codelengths between
$U$ and $\Uhat$:
\begin{eqnarray*}
&&\redundancy_{U}(n) = E_{P}[L_U(Z^n)] - 
E_{P}[- \ln \mmstar(Z^n)]. \\
&=& E_{P}[L_U(Z^n) - L_{\Uhat}(Z^n)] + \redundancy_{\Uhat}(n)
\\ &=& E_{P}[L_U(Z^n) - L_{\Uhat}(Z^n)] + \frac{1}{2}
\frac{\var_P X}{\var_{M_{\mu^*}} X} \ln n + O(1),
\end{eqnarray*}
where the last equality follows from (\ref{eq:redundancyML}), which is valid
under the conditions imposed on ${d^4 \over
  d\mu^4} D(M_\mstar \| \mmu)$ (see Condition 1 in \cite{GrunwaldD05} for
  details). We have:
$$
\begin{array}{ll}
&L_U(Z^n) - L_{\Uhat}(Z^n) \\
=& \sum_{i=0}^{n-1} \left(- \ln U(Z_{i+1} \mid Z_i) +
\ln {\Uhat}(Z_{i+1} \mid Z_i) \right)  \\
=& \sum_{i=0}^{n-1} \left( \ln\left(1 + \frac{1}{2i}\right) -
\ln\left(1 + \frac{1}{2i} I_\model(\mihat)(X_{i+1} -
\mihat)^2\right)\right) .
\end{array}
$$
Since $\ln\left(1+\frac{1}{2i}\right) = \frac{1}{2i} + O(i^{-2})$, we have:
\begin{equation}
\label{eq:logarithm}
\sum_{i=0}^{n-1} \ln\left(1 + \frac{1}{2i}\right) = \frac{1}{2} \ln n + O(1).
\end{equation}
To analyze the second term in the sum, we use the fact that for
arbitary $a\geq 0$:
$$
 -a \leq -\ln(1+a) \leq -a  + \frac{1}{2} a^2,
$$
which follows e.g. from expanding the logarithm into Taylor
expansion up to the second order. In our case, $a = V_i :=\frac{1}{2i}
I_\model(\mihat)(X_{i+1} - \mihat)^2$. We will show that $E_P[V_i^2]$ is
$O(i^{-2})$, and then $E_P[-\ln(1+V_i)] = -E_P[V_i] + O(i^{-2})$.
We have:
$$
\begin{array}{l@{~}l}
&E_P \left[ V_i^2 \right] = \frac{1}{4i^2} E_P \left[
I_\model^2(\mihat) \left( X_{i+1} - \mihat\right)^4 
\right] \\ 
=&\frac{1}{4i^2}  E_{X^i \sim P} \left[ I_\model^2 (\mihat)E_{X_{i+1}
\sim P} \left[\left( X_{i+1} - \mstar + \mstar - \mihat\right)^4\right] \right] \\
=&\frac{1}{4i^2}  E_P \left[ I_\model^2 (\mihat) \left( \moment{4} -
4\delta_i\moment{3}  + 6  \delta_i^2\var_{P_\mstar}X + \delta_i^4 \right)
\right],
\end{array}
$$
where $\moment{k}$ is $E_P[(X-\mstar)^k]$, the $k$-th central moment of
$P_{\mstar}$, and $\delta_i = \mihat - \mstar$. We will show that the terms
under expectation are bounded. If $I_\model(\mihat)$ is constant,
then we apply  Theorem \ref{thm:devavasym} with $k=1$, $k=2$ and $k=4$ to the
second, third and fourth term, respectively and thus all the terms under
expectation are $O(1)$. If $I_\model(\mihat)$ is not constant, then by Condition 
\ref{condition} the assumptions of Lemma
\ref{lemm:bounding} are satisfied with $f(\mu) = I^2_\model(\mu)$ and 
$s=0,2,4$. Applying the lemma subsequently to the first, third and fourth term
(with $s=0,2,4$, respectively), we see that all those terms are $O(1)$. The
second term is also $O(1)$ by applying Lemma \ref{lemm:bounding} once again
with $f(\mu) = \mu I^2_\model(\mu)$ and $s=0$ (assumptions are again
satisfied by Condition \ref{condition}). Thus, we showed that $E_P[V_i^2] =
O(i^{-2})$.

Next, we consider $E_P[V_i]$:
\begin{eqnarray*}
&&E_P[V_i] = \frac{1}{2i} E_P \left[ 
I_\model(\mihat)(X_{i+1} - \mstar + \mstar - \mihat)^2 \right] \\
&=& \frac{1}{2i} E_P \left[I_\model(\mihat)\left( \var_{P_\mstar}X + \delta_i^2
 \right) \right] \\
&=& \frac{1}{2i} \left(\var_{P_\mstar}X E_P \left[I_\model(\mihat)\right] + E_P
\left[I_\model(\mihat) \delta_i^2\right] \right).
\end{eqnarray*}
The second term $E_P\left[I_\model(\mihat) \delta_i^2\right]$ is $O(i^{-1})$ by
Lemma \ref{lemm:bounding} applied with $f(\mu) = I_\model(\mu)$ and $s=2$. To
analyze the first term we expand $I_\model(\mihat)$ into Taylor series around
$\mstar$: 
$$
E_P\left[I_\model(\mihat)\right] = I_\model(\mstar) + E_P\left[\frac{d}{d\mu}
I_\model(\mstar) \delta_i + \frac{d^2}{d^2\mu} I_\model(\mu)
\delta^2_i\right], $$
for some $\mu$ between $\mstar$ and $\mihat$. The linear
term in the expansion is $O(i^{-1})$ by Theorem \ref{thm:devavasym} applied
with $k=1$. The quadratic term is $O(i^{-1})$ by applying
Lemma \ref{lemm:bounding} with $f(\mu) = \frac{d^2}{d^2\mu} I_\model(\mu)$ and $s=2$;
Condition \ref{condition} guarantees that assumptions of the lemma
are satisfied. Thus, using (\ref{eq:efficient}):
$$
\!E_P[V_i] \!=\! \frac{1}{2i} I_{\model}(\mstar) \var_{P_\mstar}\!X + O(i^{-2}) \!=\!
\frac{1}{2i} \frac{\var_{P_\mstar}\!X}{\var_{M_\mstar}\!X} + O(i^{-2}), 
$$
so that:
\begin{equation}
\label{eq:ln_V}
E_P[-\ln(1+V_i)] = - \frac{1}{2i} \frac{\var_{P_\mstar}X}{\var_{M_\mstar}X} + O(i^{-2}), 
\end{equation}
Taking together (\ref{eq:ln_V}) and (\ref{eq:logarithm}) we have:
$$
L_U(Z^n) - L_{\Uhat}(Z^n) = \frac{1}{2} \ln n - \frac{1}{2}
\frac{\var_{P_\mstar}X}{\var_{M_\mstar}X} \ln n+ O(1),
$$
and thus:
$$
R_U(n) = \frac{1}{2} \ln n + O(1).
$$
 \end{proof}

\end{document}